\documentclass[11pt]{amsart}
\usepackage{personal-commands}

\begin{document}
	
\begin{spacing}{1.2}

\title[Resonances with vanishing interactions at crossings]{Semiclassical resonances for matrix Schrödinger operators with vanishing interactions at crossings of classical trajectories}

\author{Vincent Louatron}


\begin{abstract}
	We study the semiclassical distribution of resonances of a $2 \times 2$ matrix Schrödinger operator \eqref{Eq:MatrixSchrödingerOperator} near a fixed energy level where the underlying classical trajectories $\Gamma_1$ of $P_1$ and $\Gamma_2$ of $P_2$ are respectively periodic and non-trapping. The aim is to compute the imaginary part of the resonances appearing near the eigenvalues created by $P_1$ when $\Gamma_1$ intersects with $\Gamma_2$ with finite contact order $m$.
	Recent results \cite{AFH} and \cite{AFH2} assert that the width of resonances is of polynomial order in the semiclassical parameter with exponent $1+2/(m+1)$, if the interaction $U=r_0(x) + ihr_1(x)D_x$ is elliptic at the crossing point. Here, we remove this ellipticity assumption and prove that the exponent is $1+2(k+1)/(m+1)$ where $k$ is a "vanishing order" of the interaction at the crossing points, suitably defined in terms of the vanishing orders of $r_0$ and $r_1$ depending on whether the crossing point is a turning point or not. 
\end{abstract}

\maketitle

\section{Introduction}
In this paper, we look at the asymptotic distribution of quantum resonances in the semiclassical limit for a matrix Schrödinger operator. More specifically, we are concerned with the influence that the crossings of the classical trajectories associated with this operator have on this distribution. Our motivation comes from molecular predissociation problems that have been studied under the Born-Oppenheimer approximation for many years for scalar operators and more recently for matrix-valued operators (see for example references in \cite{KMSW} for some history on the topic).
In this approximation, the problem can be reduced to the study of the matrix Schrödinger operator
\begin{equation}\label{Eq:MatrixSchrödingerOperator} P := 
	\begin{pmatrix}
		P_1 & h U\\
		h U^* & P_2
	\end{pmatrix}
\end{equation}
\noindent
with 
\[
P_j :=(h D_x)^2  + V_j(x),\, j \in \{1,2\}, \stext{ and} D_x:= -i \partial_x,
\]
where $h>0$ denotes the semiclassical parameter ($h^2$ is the ratio of the electronic mass over the nuclear mass) and $U = r_0(x) + ir_1(x)hD_x$ a first order (semiclassical) differential operator. 

We provide a generalization of the main results of \cite{AFH} and \cite{AFH2}, which give the distribution of semiclassical resonances of $P$ near an energy $E_0$ such that $V_1$ has a simple well on one hand, and $E_0$ is a non-trapping energy for $V_2$ on the other hand (see figures \ref{Fig:PotentialsCrossings} and \ref{Fig:PotentialsCrossingsTypeB}). 
Such distribution was first studied in \cite{FMW3} for an energy level above the crossing level (referred to as cases \ref{Case:Ia} and \ref{Case:Ib} below) for transversal crossings ($m=1$ below). In this work, under some elliptic condition at $x=0$ on the interaction term $U$, the problem is brought back to the study of first order differential equations via normal forms
. Due to the important role of the ellipticity assumption in this normal form method, it seems hard to deal with the non-elliptic case. However, authors in \cite{AFH} provided a study of the resonances of $P$ using another method, which allowed to generalize the previous results to tangential crossings ($m \geq 2$ below). In short, this new method consists in reducing the problem to the proof of a stationary phase expansion with a degenerate phase, and it does not require the use of normal forms. In this article, we use this new method to get rid of the elliptic assumption on $U$, which was assumed up until now. The problem then reduces to the proof of a stationary phase expansion with a degenerate phase and a degenerate amplitude (see Appendix \ref{Appendix:DegenerateStationaryPhaseEstimates}). 
A similar study of the distribution of semiclassical resonances of $P$ has been conducted for an energy level at the same level as the crossing (referred to as cases \ref{Case:II} and \ref{Case:III} below) in \cite{FMW1} and \cite{FMW2} when potentials cross transversely ($m=2$) and in \cite{AFH2} when potentials cross tangentially ($m \geq 4$). We also generalize those results by removing the ellipticity assumption and reducing the proof to a degenerate stationary phase argument.

If the symbol $U(x,\xi) = r_0(x) + ihr_1(x) \xi$ of the interaction term vanishes identically on $\Rr$, then the study of the system $(P-E_0)u = 0$ for $P$ can be brought back to the two eigenvalue problems $(P_j-E_0)u = 0$ corresponding to each scalar operator $P_j$. However, the well known \textit{golden rule} of P. Dirac, coined by E. Fermi, states that the presence of the interaction $U$ between the two scalar potentials $P_j$ may create \textit{resonances} for the matrix operator $P$. More precisely, near an energy $E_0$, this interaction shifts the eigenvalues of $P_1$ away from the real axis, and the shifted points on the complex plane obtained are the resonances of $P$. It is well known that the width between the resonances and the real line is proportional to the inverse of the half-life of the molecule described by $P$; our goal is to compute this width. 

It has been established in \cite{AFH} that the width of the resonances gets larger as the crossings become more degenerate. This implies that the molecule has a shorter half-life.
Conversely, the main result of this article is that the width of the resonances gets narrower as the interaction term gets weaker, resulting in a longer half-life. More precisely we show that if $r_1=0$, $k_0$ is the vanishing order of $r_0$ and $m$ the contact order of the characteristic sets $\Gamma_j(E_0)$, then in the cases \ref{Case:Ia} and \ref{Case:Ib} the imaginary part of the resonances is of size $h^{1+2(k_0+1)/(m+1)}$ in the general case and of size $h^{1+2(k_0+2)/(m+1)}$ when both $k_0$ and $m$ are odd. In the case \ref{Case:II}, we show that the imaginary part of the resonances is of size $h^{1+2(2k_0+1)/(m+1)}$. We also compute the first terms of the associated asymptotic expansions. We also discuss the case where $U = r_0(x) + ir_1(x)hD_x$, which is relevant to the physical situation (see \cite{FMW2}). We show that the main results hold replacing $k_0$ by $\min(k_0,k_1)$ in cases \ref{Case:Ia} and \ref{Case:Ib}, and by $\min(k_0,k_1+1/2)$ in case \ref{Case:II}.

In Section \ref{Section:MainResult}, we introduce the framework of this article in the cases \ref{Case:Ia} and \ref{Case:Ib}, and we state in the main results the effect that the vanishing order of $r_0$ at $x=0$ has on the distribution of resonances. In Section \ref{Section:ConnectionFormula}, we state and prove the microlocal connection formula at a crossing point, which is the key result in the proof of the two main theorems. This essentially reduces to the degenerate stationary phase estimate and expansions proved in Appendix \ref{Appendix:DegenerateStationaryPhaseEstimates}; the use of those results in this context constitutes the main improvement of this article in regard to previous works. We then apply the microlocal connection formula in Section \ref{Section:ProofOfMainResult} to figure out the width of the resonances.
Next, we deal with the case \ref{Case:II} in Section \ref{Section:typeB}.
We finally discuss in Section \ref{Section:r1} the generalization (for all cases \ref{Case:Ia}, \ref{Case:Ib} and \ref{Case:II}) to the case where $U = r_0(x) + ihr_1(x)D_x$ and compare the influence of $r_0$ and $r_1$.

\section{Framework and main results}\label{Section:MainResult}
We make a few assumptions on the potentials $V_j$, the interaction term $U$ and the considered energy $E_0$. The situation is pictured in the figures \ref{Fig:PotentialsCrossings} and \ref{Fig:PhaseSpaceCrossings} 
. We define $P$ as in \eqref{Eq:MatrixSchrödingerOperator} with $U = r_0(x)$. 

\begin{assumption}\label{Assumption:AnalyticContinuation}
The potentials $V_1$, $V_2$ and the function $r_0$ are smooth and real-valued on $\Rr$. Moreover, there exist $\delta_0 > 0$ such that $V_1$, $V_2$ and $r_0$ have a bounded analytic continuation to the complex domain
\[
\Sigma := \{ x \in \Cc \; ; \; |\Im(x)| < \delta_0 \langle \Re(x) \rangle\}.
\]
where $\langle t \rangle := \sqrt{1+|t|^2}$. Moreover, the potentials $V_j$ admit limits $V_j^\pm := \lim_{\substack{\Re(x) \to \pm \infty\\x \in \Sigma}} V_j(x) \in \Rr$ different from $E_0$.
\end{assumption}

Under Assumption \ref{Assumption:AnalyticContinuation}, the operator $P$ is defined as an unbounded, essentially self-adjoint operator on $L^2(\Rr,\Cc^2)$ with domain $H^2(\Rr,\Cc^2)$. The analytic assumptions on $P$ allow to define the resonances of $P$ near $E_0$ using the analytic dilation method (see \cite{AC}). Consider the complex box 
\begin{equation}\label{Eq:ComplexBox}
\mc{R}_h := [E_0 - Lh, E_0 + Lh] + i [-Lh, Lh] \quad \text{with } L > 0 \text{ independent of } h.
\end{equation}
We define the resonances of $P$ near $E_0$ as follows.
\begin{definition}
We say that a solution $w$ of the equation $(P-E)w = 0$ is \textbf{outgoing} when, for some $\theta > 0$ sufficiently small, $w(xe^{i\theta}) \in L^2(\Rr,\Cc^2)$. Then, when $L$ is sufficiently small, we define $E \in \mc{R}_h$ to be a \textbf{resonance of $P$} when there exists an outgoing, non-identically vanishing solution $w$ of $(P-E)w = 0$. Such $w$ is then called a \textbf{resonant state}. We write $\Res_h(P)$ the set of resonances of $P$ in $\mc{R}_h$.
\end{definition}

\begin{rmk}\label{Rmk:OutgoingStates}
One convenient attribute of outgoing states is that they are microlocally infinitely small in the sense of \eqref{Eq:MicrolocallyInfinitelySmall} on the incoming tail ($\gamma_2^\flat$ with notations of \eqref{Eq:DefinitionIncomingPaths}). This fact is well known in the scalar case (see \cite{HS}), and the reduction from the matrix case to the scalar case is explained for example in \cite{FMW3}.
\end{rmk}

\begin{rmk}
The analycity in the angular complex domain 
\[
\{ x \in \Cc \; ; \;|\Im(x)| < \tan(\theta_0)|\Re(x)|\stext{and} |\Re(x)| > R\}
\]
with $\pi/2 > \theta_0 > 0$ and $R > 0$ is actually enough to define resonances via the analytic distortion method (see \cite{Hun}), which is more refined than the analytic dilation used above. We can also prove the microlocal connection formula (Theorem \ref{Thm:AsymptoticConnectionFormula}) as well as the main results of Section \ref{Section:MainResult} under this weaker assumption for the cases \ref{Case:Ia} and \ref{Case:Ib}. However in the case \ref{Case:II}, some technical difficulties appear when defining WKB solutions normalized near the turning point (see the phase functions $\phi_j^\bullet$ in \eqref{Eq:frhoTypeB}). These difficulties can be avoided by assuming the analycity around $0$ as in Assumption \ref{Assumption:AnalyticContinuation}.
\end{rmk}

\begin{assumption}\label{Assumption:TrappingAndNonTrapping}
There exist real numbers $a_0 \leq 0 \leq a'_0$ and $b_0$ such that $a_0 \neq a'_0$ and, for all $x \in \Rr$,
\begin{equation}\label{Eq:TrappingWellCondition}
\frac{V_1(x)-E_0}{(x-a_0)(x-a'_0)} > 0
\end{equation}
and
\begin{equation}\label{Eq:NonTrappingCondition}
\frac{V_2(x)-E_0}{x-b_0} > 0.
\end{equation}
\end{assumption}


\noindent
Remark that, under the two previous assumptions, we have $V_1^\pm > E_0$ and $V_2^+ > E_0 > V_2^-$.

We state some known results in spectral theory holding under Assumption \ref{Assumption:TrappingAndNonTrapping}. To this end, we first define $\Gamma_j = \Gamma_j(E) := p_j^{-1}(\{E\})$ the classical trajectory, or characteristic set, associated to the Hamiltonian $p_j(x,\xi) = \xi^2 + V_j(x)$ for the energy $E \in \Rr$. 

On one hand, under \eqref{Eq:NonTrappingCondition}, the spectrum of $P_2$ near $E_0$ is continuous. Moreover, the energies near $E_0$ are \textit{non-trapping}. This means that the Hamiltonian flow $e^{tH_p}$ carries any point $\rho \in \Gamma_2(E)$ of the characteristic set of $P_2$ away from any compact set: $\norm{e^{tH_p}\rho}{} \underset{t  \to \pm \infty}{\longrightarrow}+\infty$. Here, $H_p = \partial_\xi p \partial_x -\partial_x p \partial_\xi$ is the Hamiltonian vector field associated to $P$. It is known that there are no resonances for $P_2$ in a complex zone around $E_0$ of width $h \log\left( 1/h\right)$ (see \cite{Mar}). 

On the other hand, under \eqref{Eq:TrappingWellCondition}, the spectrum of $P_1$ consists of eigenvalues near $E_0$. 
Under the condition \eqref{Eq:TrappingWellCondition}, $V_1$ admits a simple well for energies $E \in \Rr$ close to $E_0$, that is to say $\Gamma_1(E)$ is a simple closed curve (or Jordan curve). 
To specify the eigenvalues of $P_1$, we first define the classical action on $\Gamma_1$ as
\begin{equation}\label{Eq:a_0(E)}
\mc{A}(E) := \int_{\Gamma_1(E)} \xi dx.
\end{equation}
which is a smooth and monotonically increasing\footnote{\; For $E \in \Rr$, the function $\mc{A}(E)$ describes the area enclosed by $\Gamma_1$ and $\mc{A}'(E) = \int_{\Gamma_1(E)} \frac{dx}{2\xi}$ corresponds to the time that any classical solution $e^{tH_{p_1}}(x_0,\xi_0)$ takes to travel $\Gamma_1$. This function analytically extends in $\Cc$ around $E = 0$.} function of $E$ near $E_0$. Then, it is well known (see \cite{ILR}, \cite{Yaf}) that the eigenvalues of $P_1$ can be approximated by energies satisfying the so-called Bohr-Sommerfeld quantization rule
\begin{equation}\label{Eq:Bohr-Sommerfeld quantization rule}
\cos\left( \frac{\mc{A}(E)}{2h} \right) = 0.
\end{equation}
We define accordingly the following subset of $[E_0-Lh, E_0+Lh]$, consisting of real energies close to $E_0$ which satisfy \eqref{Eq:Bohr-Sommerfeld quantization rule}:
\[
\mathfrak{B}_h:= \left\{E \in [ E_0 - Lh, E_0 + Lh]; \; \cos\left( \frac{\mc{A}(E)}{2h} \right) = 0\right\}.
\]

When $r_0$ vanishes identically on $\Rr$, the spectrum of $P$ near $E_0$ consists of a continuous spectrum with embedded eigenvalues, approximated by the Bohr-Sommerfeld quantization rule \eqref{Eq:Bohr-Sommerfeld quantization rule}. In the general case however, these embedded eigenvalues become resonances: this corresponds to Fermi's golden rule. 

\begin{assumption}\label{Assumption:FiniteOrderForPotentials}
The set $\{x \in \Rr; \; V_1(x) = V_2(x)\leq E_0\}$ of potential crossings is reduced to a unique point $\{0\}$ and the function $V_2-V_1$ vanishes at $x=0$ at a finite order $n \in \Nn \setminus \{0\}$:
\begin{equation}\label{Eq:ContactOrderOfPotentials}
(V_2-V_1)^{(j)}(0) = 0 \text{ for all }  j \in \{0, \dots, n-1\}, \stext{and} (V_2-V_1)^{(n)}(0) \neq 0.
\end{equation}
\end{assumption}

We list the four possible cases for potentials satisfying assumptions \ref{Assumption:TrappingAndNonTrapping} and \ref{Assumption:FiniteOrderForPotentials}:
\begin{enumerate}
    \item[\namedlabel{Case:Ia}{(I-a)}] $a_0 < 0 < b_0 < a'_0$ in which case $E_0 > 0$ and $n \geq 3$ is odd. 
    \item[\namedlabel{Case:Ib}{(I-b)}] $a_0 < 0 < a'_0 < b_0$ in which case $E_0 > 0$ and $n \geq 2$ is even. 
    \item[\namedlabel{Case:II}{(II)}] $a_0<0 = b_0 = a'_0$ in which case $E_0 = 0$. 
    \item[\namedlabel{Case:III}{(III)}] $a_0 = b_0 = 0 < a'_0$ in which case $n=1$. This is the setting of \cite{FMW1} and will be omitted here.  
\end{enumerate}


In the following, \textit{crossing points} refer to elements of $\ms{C} := \Gamma_1(E) \cap \Gamma_2(E)$, and in this setting there are two crossing points, noted $\rho_{\pm}$. The \textit{contact order} at a crossing point $\rho$ refers to the geometric contact order $m$ of the characteristic sets at $\rho$. It is defined as the order at which the functions $H_{p_1}^m p_2 (\rho)$ and $H_{p_2}^m p_1 (\rho)$ vanish at $\rho$. In this context, $m = n$. We will write $\Gamma_0(E) := \Gamma_1(E) \cup \Gamma_2(E)$ the union of the two classical trajectories. The \textit{turning points} will refer to the points of $\Gamma_0(E) \cap \{ \xi = 0\}$; they correspond to the points $a_0(E)$, $b_0(E)$ and $a'_0(E)$ defined in Assumption \ref{Assumption:TrappingAndNonTrapping}. Note that they differ from crossing points here. 

The figures \ref{Fig:PotentialsCrossings} and \ref{Fig:PhaseSpaceCrossings} account for the case \ref{Case:Ia}. Note that one can assume without loss of generality that $V_1(0) = V_2(0) = 0$  (replace $E_0$ by $E_0-V_1(0)$). 

\begin{figure}[h!]
    {\centering
    \begin{minipage}{0.45\textwidth}
    \centering
    \includegraphics[width=7cm]{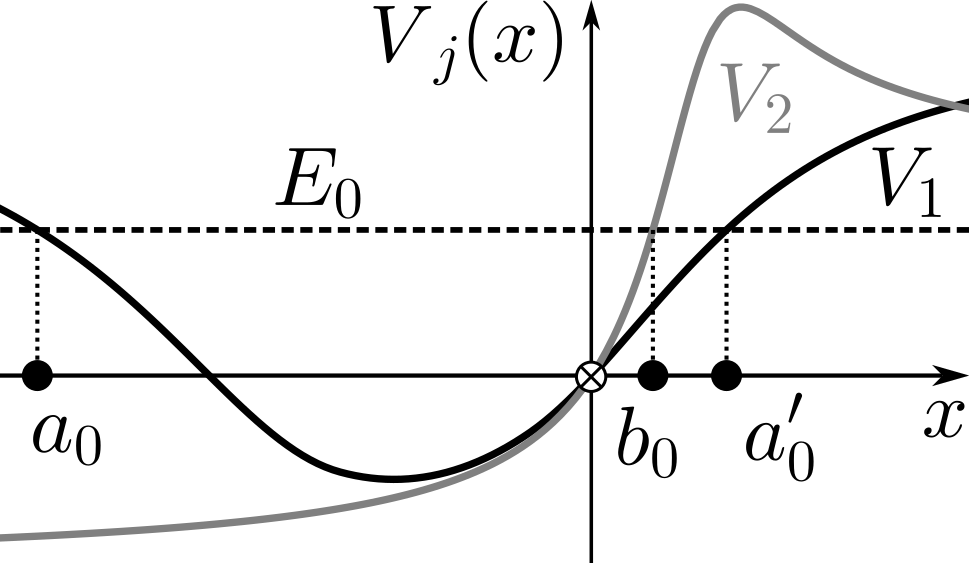}
    \caption{Potential crossing at $x=0$ - case \ref{Case:Ia}}
    \label{Fig:PotentialsCrossings}
    \end{minipage}\hfill
    \begin{minipage}{0.45\textwidth}
    \centering
    \includegraphics[width=7cm]{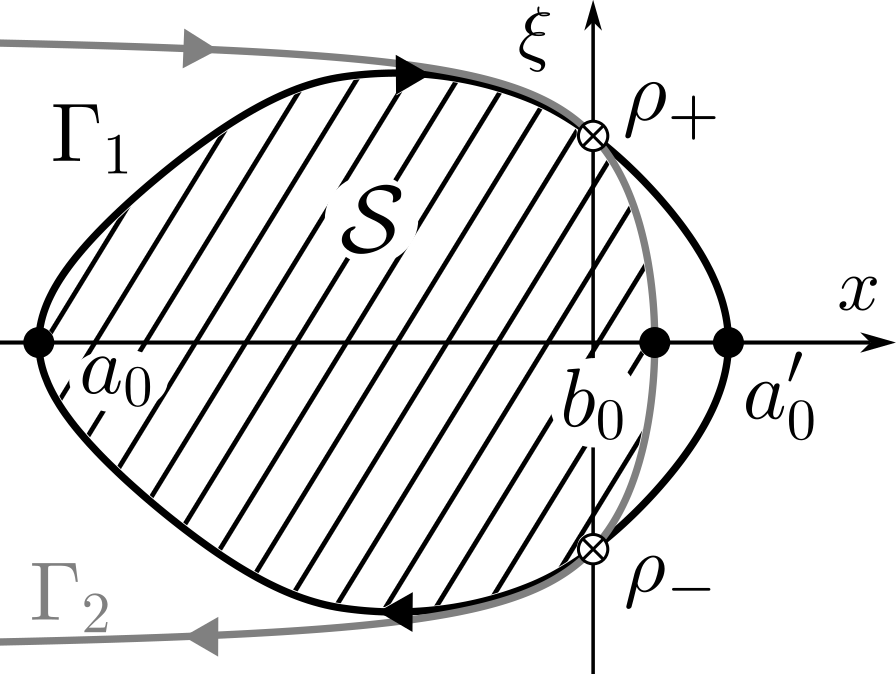}
    \caption{Phase space crossings of the associated classical trajectories case \ref{Case:Ia}}
    \label{Fig:PhaseSpaceCrossings}
    \end{minipage}
    }
\end{figure}

\begin{assumption}\label{Assumption:FiniteOrderInteraction}
The function $r_0$ vanishes at $x=0$ at a finite order $k \in \Nn$ : 
\begin{equation}\label{Eq:VanishingConditionOnInteractionTerm}
r_0^{(j)}(0) = 0 \stext{for all} j \in \{0, \dots, k-1\} \stext{and} r_0^{(k)}(0) \neq 0
\end{equation}
Moreover, we assume that $k < n$.
\end{assumption}

The assumption $k < n$ ensures that the principal terms appearing  in the asymptotic expansions (terms of order $h^{(k+1)/(m+1)}$ in Section \ref{Section:ConnectionFormula} where $m=n$ or of order $h^{(2k+1)/(m+1)}$ in Section \ref{Section:typeB} where $n = m/2$) do not get absorbed by the approximations of order $h$ that we use throughout the article (ex: asymptotic behavior of scalar solutions \eqref{Eq:AsymptoticBehaviorScalarSolutions} used in the proof of Proposition \ref{Prop:IntegralOperators}).

The main results are the two following theorems; they will be proved in Section \ref{Section:ConnectionFormula}. The first one gives the asymptotic distribution of resonances of the matrix operator $P$ in terms of geometric data at the crossing points and in terms of the actions of the closed trajectories. In particular, the width of resonances gets narrower as $k$ grows and larger as $m$ grows.

\begin{thm}\label{Thm:MainResult}
Assume that assumptions \ref{Assumption:AnalyticContinuation} up to \ref{Assumption:FiniteOrderInteraction} hold true and that either \ref{Case:Ia} or \ref{Case:Ib} holds true. Then for all small $h >0$, there exists a bijective map $z_h: \mathfrak{B}_h \to \Res_h(P)$ such that for any $E \in \mathfrak{B}_h$ one has 
\begin{equation}\label{Eq:zh(E)}
\va{z_h(E) - E} = \BigO{h^{1+2\frac{k+1}{m+1}}}
\end{equation}
and
\begin{equation}\label{Eq:Im(E)}
\Im z_h(E) = -D(E)h^{1+2\frac{k+1}{m+1}} + \BigO{h^{1+2\frac{k+1}{m+1} + s}}
\end{equation}
with $s := \min\left( 1/3, 1/(m+1)\right)$ and
\begin{equation}\label{Eq:D(E)}
D(E) := \frac{2|\omega|^2}{\mc{A}'(E_0)}  \va{ \sin\left(\frac{\sgn(v_m)(k+1)}{2(m+1)}\pi + \frac{\mc{S}(E)}{2h}\right)}^2.
\end{equation}
Here 
$\mc{S} = \mc{S}(E) := 2\left(\int_{a'_0(E)}^0 \sqrt{E-V_1(x)}dx + \int_0^{b_0(E)} \sqrt{E-V_2(x)}dx\right)$
is the area bounded by $\Gamma_1(E)$ and $\Gamma_2(E)$ (shaded area on Figure \ref{Fig:PhaseSpaceCrossings}), and $\omega$ is defined by
\begin{equation}\label{Eq:Omega}
\omega := \frac{\mu_{k,m}\left(\frac{\sgn(v_m)(k+1)}{2(m+1)}\pi\right)}{(m+1)k!}  
\pmb\Gamma\left(\frac{k+1}{m+1}\right)  E_0^{\frac{k-m}{2(m+1)}} \left(\frac{2(m+1)!}{\va{v_m}}\right)^{\frac{k+1}{m+1}}r_0^{(k)}(0)
\end{equation}
where $v_m := (V_2-V_1)^{(m)}(0)$ and
\begin{equation}\label{Eq:mu_kmtheta}
\mu_{k,m}(\theta) := \frac{e^{i\theta} + (-1)^ke^{i(-1)^{m+1}\theta}}{2} \stext{for $\theta$ in $\Rr$.}
\end{equation}
\end{thm}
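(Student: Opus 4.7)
The plan is to turn the outgoing eigenvalue problem $(P-E)w = 0$ into a single scalar quantization equation by propagating microlocal WKB data once around $\Gamma_1(E)$. By the microlocal characterisation of outgoing resonant states (Remark \ref{Rmk:OutgoingStates}), any nontrivial outgoing $w$ is microlocally concentrated on $\Gamma(E) = \Gamma_1(E) \cup \Gamma_2(E)$ and, on every smooth arc between the turning points $a(E), a'(E), b(E)$ and the crossings $\rho^\pm$, each component $w_j$ admits a scalar WKB representation determined by a single complex coefficient. The outgoing condition additionally forces the coefficient on the incoming tail of $\Gamma_2$ (the $\xi > 0$ branch as $x \to -\infty$) to be $O(h^\infty)$. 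The finitely many remaining unknown coefficients are linked by compatibility relations at the turning points and at the crossings, and the vanishing of the determinant of the resulting linear system is the resonance condition.

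\textbf{Assembly of the connection matrix.} At the three turning points $a(E), a'(E), b(E)$ the WKB branches are linked by the classical scalar Airy-type formula, which, after one full trip around $\Gamma_1$, produces the factor $e^{i\mathcal{A}(E)/h}$ together with the Maslov half-integer phase. At each crossing $\rho^\pm$ I apply the connection formula proved in Section \ref{Section:ConnectionFormula}: its $\Gamma_1 \to \Gamma_1$ and $\Gamma_2 \to \Gamma_2$ diagonal parts agree to leading order with the scalar transport, while the off-diagonal $\Gamma_1 \leftrightarrow \Gamma_2$ blocks are of size $h^{(k+1)/(m+1)}$, with leading coefficient given by the constant $\omega$ of \eqref{Eq:Omega} up to a phase delivered by the degenerate stationary-phase expansions of Appendix \ref{Appendix:DegenerateStationaryPhaseEstimates}. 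Composing the four connection maps around the cycle $\Gamma_1$ and substituting the $O(h^\infty)$ condition on the incoming $\Gamma_2$ tail eliminates all unknowns but one, yielding a scalar effective equation of the form
\begin{equation*}
\cos\!\left(\frac{\mathcal{A}(E)}{2h}\right) \;=\; -i\,h^{2(k+1)/(m+1)}\bigl( M_0(E) + O(h^{s}) \bigr).
\end{equation*}

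\textbf{Interference of the two crossings.} The coefficient $M_0(E)$ decomposes as a sum of contributions from $\rho^+$ and $\rho^-$. Propagating from $\rho^+$ to $\rho^-$ along $\Gamma_1$ (through $a'(E)$) and along $\Gamma_2$ (through $b(E)$) differs by the action $\mathcal{S}(E)$, i.e. the area of the shaded lune on Figure \ref{Fig:PhaseSpaceCrossings}. Since $\rho^+$ and $\rho^-$ sit respectively in $\{\xi>0\}$ and $\{\xi<0\}$, the two corresponding degenerate stationary phases are complex conjugate, so the two crossing contributions take the form $\omega e^{-i\mathcal{S}(E)/(2h)}$ and $\bar\omega e^{i\mathcal{S}(E)/(2h)}$; their sum is $2|\omega|\cos\bigl(\arg\omega - \mathcal{S}(E)/(2h)\bigr)$. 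Squaring this and incorporating the normalisation of the Bohr--Sommerfeld quasimode (which brings in $\mathcal{A}'(E_0)$ as the Jacobian linking $E$ and the WKB phase) yields the prefactor $2\sqrt{2}/\sqrt{\mathcal{A}'(E_0)}$ of \eqref{Eq:D(E)}.

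\textbf{Conclusion and main obstacle.} A Rouché argument — or, equivalently, the implicit function theorem after rescaling $E = E^{\mathrm{BS}} + h^{1+2(k+1)/(m+1)}\tau$ — shows that the effective equation has, for each $E \in \mathfrak{B}_h$, a unique root $z_h(E)$ in a disc of radius $O(h^{1+2(k+1)/(m+1)})$ centred at $E$, giving the bijection and \eqref{Eq:zh(E)}. Using $|\sin(\mathcal{A}(E)/(2h))| = 1$ at a Bohr--Sommerfeld point and comparing imaginary parts then produces \eqref{Eq:Im(E)} with $D(E)$ given by \eqref{Eq:D(E)}. The delicate step will be the bookkeeping of amplitudes and phases through the two crossings: matching the output of the connection formula of Section \ref{Section:ConnectionFormula} with the explicit form of $\omega$ in \eqref{Eq:Omega} — including the $\pmb\Gamma$-function factor, the exponent $E_0^{(k-m)/(2(m+1))}$, and the parity-sensitive function $\mu_{k,m}$ — requires carrying the expansions of Appendix \ref{Appendix:DegenerateStationaryPhaseEstimates} to the right order and tracking the successive $\sqrt{2\pi i}$ factors coming from each stationary-phase step, in order to be certain that the interference at the two crossings is genuinely coherent and yields the squared cosine rather than two decoupled contributions.
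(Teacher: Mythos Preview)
Your monodromy/Rouché strategy is exactly what the paper uses to establish the bijection \eqref{Eq:zh(E)} (it calls the zeros of the monodromy determinant \emph{pseudo-resonances} and then matches them to true resonances via \cite[Proposition 4.10]{AFH}). But for the imaginary part \eqref{Eq:Im(E)} the paper does \emph{not} extract $\Im z_h(E)$ from the scalar quantization equation; instead it invokes the flux identity of \cite[Proposition 7.1]{FMW3},
\[
\Im(E)\;=\;-\,\frac{|\breve\alpha_2^{\sharp}|^2}{\|u\|_{L^2(x_0,+\infty)}^2}\,h\,\bigl(1+O(h)\bigr),
\]
computes the outgoing amplitude $\breve\alpha_2^{\sharp}=t_{21}^{-}\,e^{-i\mc{S}_1/h+i\pi/2}+t_{21}^{+}\,e^{-i\mc{S}_2/h+i\pi/2}$ using only the \emph{off-diagonal} entries of $T^{\pm}$, and gets $\|u\|^2=2\mc{A}'(E)+O(h^{1/3}+h^{(k+1)/(m+1)})$. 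The squared cosine in $D(E)$ then arises from $|\breve\alpha_2^{\sharp}|^2$, not from any monodromy correction.

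This matters because your route, as written, has a gap. In the closed-loop equation on $\Gamma_1$ the $O(h^{2(k+1)/(m+1)})$ correction is the single product $t_{12}^{-}t_{21}^{+}\,e^{-i\mc{S}/h}\sim -\omega^{2}e^{-i\mc{S}/h}h^{2(k+1)/(m+1)}$; there is no sum of two terms linear in $\omega$ and $\bar\omega$ at that order, so the interference picture you describe (``their sum is $2|\omega|\cos(\arg\omega-\mc{S}/(2h))$, then square'') is really the structure of $\breve\alpha_2^{\sharp}$, not of your $M_0$. To recover the full $|\omega|^2\cos^2(\cdot)=\tfrac12|\omega|^2(1+\cos(2\arg\omega-\mc{S}/h))$ from the quantization condition you would also need the \emph{diagonal} corrections $t_{11}^{-}t_{11}^{+}-1$ to order $h^{2(k+1)/(m+1)}$ (they supply the constant half of $\cos^2$), and Theorem~\ref{Thm:AsymptoticConnectionFormula} only gives $t_{jj}^{\pm}=1+O(h^{(k+2)/(m+1)})$. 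The flux identity bypasses this entirely: it needs $T^{\pm}$ only to the accuracy already provided by Theorem~\ref{Thm:AsymptoticConnectionFormula}, and the square is built into $|\breve\alpha_2^{\sharp}|^2$.
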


In the case where both $k$ and $m$ are odd, $D(E)$ vanishes and the imaginary part of the resonances shrinks, as stated in the following theorem.

\begin{thm}\label{Thm:MainResultOdd}
Assume that assumptions \ref{Assumption:AnalyticContinuation} up to \ref{Assumption:FiniteOrderInteraction} and \ref{Case:Ib} hold true. Also assume that $k+1 < m$. Then, for all small $h >0$, there exists a bijective map $z_h: \mathfrak{B}_h \to \Res_h(P)$ such that for any $E \in \mathfrak{B}_h$ one has 
\begin{equation}\label{Eq:zh(E)Odd}
\va{z_h(E) - E} = \BigO{h^{1+2\frac{k+2}{m+1}}}
\end{equation}
and
\begin{equation}\label{Eq:Im(E)Odd}
\Im z_h(E) = -D_{odd}(E)h^{1+2\frac{k+2}{m+1}} + \BigO{h^{1+2\frac{k+2}{m+1} + s}}
\end{equation}
with $s$ as in Theorem \ref{Thm:MainResult} and
\begin{equation}\label{Eq:D(E)Odd}
D_{odd}(E) := \frac{2|\omega_{odd}|^2}{\mc{A}'(E_0)}  \va{ \sin\left(\frac{\sgn(v_m)(k+2)}{2(m+1)}\pi + \frac{\mc{S}(E)}{2h}\right)}^2.
\end{equation}
Here, $\omega_{odd}$ is defined by
\begin{equation}\label{Eq:OmegaOdd}
\omega_{odd} := 
\frac{\mu_{k+1,m}\left(\frac{\sgn(v_m)(k+2)}{2(m+1)}\pi\right)}{(m+1)(k+1)!} 
\pmb\Gamma\left(\frac{k+2}{m+1}\right)
E_0^{\frac{k+1-m}{2(m+1)}}
\left(\frac{2(m+1)!}{\va{v_m}}\right)^{\frac{k+2}{m+1}}
\left(\tau r_0^{(k)}(0) + r_0^{(k+1)}(0)\right)
\end{equation}
where
\begin{equation}\label{Eq:nu}
\tau := \frac{V_1'(0)}{2} \left(k+1-\frac{2(2k+1)\sgn(v_m)}{(m+2)} \right)
- \frac{(2k+1)v_{m+1}}{(m+1)(m+2)\va{v_m}}.
\end{equation}
\end{thm}

The leading coefficient $D(E)$ (resp. $D_{odd}(E)$) of $\Im z_h(E)$ may vanish at some energies $E\in \mathfrak{B}_h$, for each (small) $h$, more specifically energies $E$ such that $2\arg(\omega) - h^{-1}\mc{S}(E) \in \pi + 2\pi \Zz$. In this case, we observe resonances that are closer to the real axis than those with imaginary part of order $h^{1+2(k+1)/(m+1)}$ (resp. $h^{1+2(k+2)/(m+1)}$). This phenomenon, called \textit{double line of resonances} in \cite{AFH}, occurs due to the existence of a \textit{directed cycle} in $\Gamma(E_0)$. This directed cycle is the (generalized) trajectory born from the combined classical actions of both $P_1$ and $P_2$. On Figure \ref{Fig:PhaseSpaceCrossings}, it corresponds to the simple closed curve linking $\rho_+$ and $\rho_-$ via $a_0$ and $b_0$. It is illustrated in \cite[Example 2.3]{AFH}.

\begin{rmk}
As pointed in \cite[Remark 2.1]{AFH2}, the limit $h \to 0^+$ should be taken in a subset $\mathfrak{J} \subset(0,1]$, $0 \in \overline{\mathfrak{J}}$, otherwise the bijectivity of $z_h$ may break down.
\end{rmk}

\section{Microlocal connection formula at a crossing point}\label{Section:ConnectionFormula}
\subsection{Definition of the transfer matrix}\label{SubSection:DefinitionTransferMatrix}

We briefly recall some definitions of semiclassical analysis (see \cite{Zwo}). 
In this paper, we say that a function $f \in L^2(\Rr,\Cc^2)$ depending on $h$ is microlocally infinitely small around a point $(x_0,\xi_0) \in \Rr^2$, and note
\begin{equation}\label{Eq:MicrolocallyInfinitelySmall}
	f \equiv 0 \stext{near} (x_0,\xi_0)
\end{equation}
if there exists $\chi \in \C^\infty_b(\Rr^2, \Rr)$ independent of $h$ such that $\chi(x_0,\xi_0) \neq 0$ and
\[
\norm{\chi^W(x,hD_x)f}{L^2(\Rr,\Cc^2)} = \BigO{h^\infty}.
\]
We say that $f \equiv 0$ near $\Omega$, with $\Omega \subset \Rr^2$, when the previous definition applies to every point of $\Omega$. 

In this section, we fix $E_0 \in \Rr$ satisfying the previous assumptions and $E \in \mc{R}_h$. Our goal for this section is to define the transfer matrix at each crossing point (microlocal connection formula) for such energies $E$. This requires to recall a few results on microlocal solutions.

\noindent
We briefly introduce a construction (based on a WKB method) of microlocal solutions on the characteristic curves $\Gamma_j$ away from crossing and turning points. 
First, we recall that the space of microlocal solutions to the equation $(P-E)w \equiv 0$ near any connected subset of $\Gamma \setminus \ms{C}$ is of dimension $1$.
This is due to the fact that a scalar symbol of real principal type at a given point (a symbol $p$ such that $p = 0$ and $H_p \neq 0$ at this point) has its characteristic set of dimension $1$ locally around that point. In our case, $p_j - E$ is of principal type at all points of $\Gamma_j(E) \setminus \ms{C}$. The matrix case can be brought to this scalar case by transforming $P$ via a canonical (symplectic) transformation. 
Next, fix $j \in \{1,2\}$ and any point $\rho_j = (x_j,\xi_j) \in \Gamma_j(E_0) \setminus (\ms{C} \cup \{ \xi = 0\})$ on $\Gamma_j$ which is neither a crossing point nor a turning point. Then, we construct via a WKB method a non-trivial microlocal solution $f_{\rho_j}$ to the equation $(P-E)f_{\rho_j} \equiv 0$ near $\rho_j$ (see for example \cite[Proposition 5.4]{FMW3}) of the form
\begin{equation}\label{Eq:frho}
	f_{\rho_j}(x,h) = e^{\frac{i}{h}\sgn(\xi_j)\phi_j(x)}\begin{pmatrix}
		\sigma_{j,1}(x,h)\\
		\sigma_{j,2}(x,h)
	\end{pmatrix},
\end{equation}
where the phase function $\phi_j$ is defined starting from the $x$-coordinate $\rho_x = 0$ of the crossing points $(0,\pm \sqrt{E_0})$ by $\phi_j(x) := \int_0^x\sqrt{E_0-V_j(y)}\,dy$ and $\sigma_{j,k}(x,h)\sim\sum_{l\ge0}h^l\sigma_{j,k,l}(x)$ with
\begin{equation*}
	\sigma_{j,j,0}(x)=(E_0-V_j(x))^{-1/4}, \quad\sigma_{j,\hat j,0}(x)=0 \stext{and} \sigma_{j,j,l}(x_j)=0\quad\stext{for all} l\ge1.
\end{equation*}
Here, $\hat{j}$ is the complementary of $j$ in $\{1,2\}$, meaning that $\{j,\hat{j}\}=\{1,2\}$. 

Secondly, we study the behavior of microlocal solutions near crossing points. We split $\Gamma \setminus (\ms{C} \cup \{ \xi = 0\})$ in eight connected components according to the crossing points and the turning points (see Figure \ref{Fig:CrossingsPaths} for the labelling at $\rho_+$):
\begin{align}
	&\gamma_j^\flat:=\Gamma_j\cap\{x<0, \xi>0\}, \quad
	\gamma_j^\sharp:=\Gamma_j\cap\{x>0, \xi>0\},
	\label{Eq:DefinitionOutgoingPaths}\\
	&\breve\gamma_j^\flat:=\Gamma_j\cap\{x>0, \xi<0\}, \quad 
	\breve\gamma_j^\sharp:=\Gamma_j\cap\{x<0, \xi<0\}.\label{Eq:DefinitionIncomingPaths}
\end{align}
\begin{figure}[h!]
	\centering
	{
		\includegraphics[width=5.1cm]{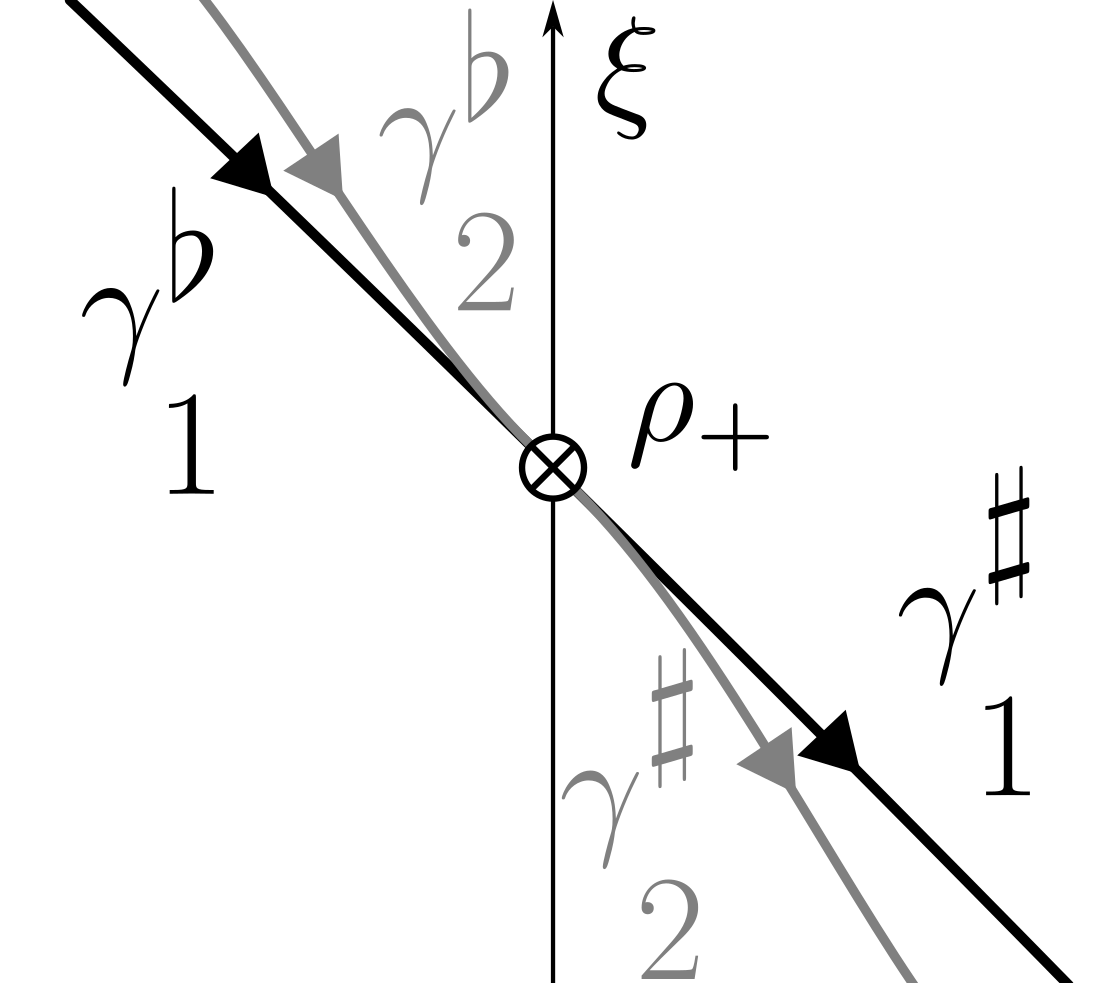}
		\caption{Labeling around the upper crossing point $\rho_+$}
		\label{Fig:CrossingsPaths}
	}
\end{figure}

Let $\rho_j^\bullet$ be any point of $\gamma_j^\bullet$.  
Since the space of microlocal solutions away from crossing points is of dimension $1$, any microlocal solution $w \in L^2(\Rr,\Cc^2)$ of
\[
(P-E)w \equiv 0 \stext{near} \rho_\pm
\]
is a (complex) multiple
\[
w \equiv \alpha_j^\bullet f_j^\bullet \stext{near} \rho_j^\bullet
\]
of the previously introduced microlocal solution $f_j^\bullet:= f_{\rho_j^\bullet}$. We symmetrically define $\breve\alpha_j^\bullet$ and $\breve f_j^\bullet$ on $\breve\gamma_j^\bullet$. The following theorem sums this remark up and gives an asymptotic microlocal connection formula near the crossing points.
\begin{thm}\label{Thm:AsymptoticConnectionFormula}
	Assume that assumptions \ref{Assumption:AnalyticContinuation} up to \ref{Assumption:FiniteOrderInteraction} hold true and that either \ref{Case:Ia} or \ref{Case:Ib} holds true. The space of microlocal solutions $w$ of the equation $(P-E)w \equiv 0$ near the crossing point $\rho_\pm$ is of dimension $2$ : there exists a $2 \times 2$ matrix $T^\pm = (t_{i,j}^\pm)_{1 \leq i,j \leq 2}$ depending on $E$, $h$ and $\rho_j^\bullet$ (respectively $E$, $h$ and $\breve\rho_j^\bullet$ for $T^-$) such that, for any microlocal solution $w \in L^2(\Rr,\Cc^2)$ of $(P-E)w \equiv 0$ near $\rho_\pm$ with $\norm{w}{L^2(\Rr,\Cc^2)} \leq 1$,
	\begin{equation}\label{Eq:TransferMatrix}
		\begin{pmatrix}
			\alpha_1^\sharp\\
			\alpha_2^\sharp
		\end{pmatrix}
		= T^+(E,h)
		\begin{pmatrix}
			\alpha_1^\flat\\
			\alpha_2^\flat
		\end{pmatrix}
		\stext{and}
		\begin{pmatrix}
			\breve\alpha_1^\sharp\\
			\breve\alpha_2^\sharp
		\end{pmatrix}
		= T^-(E,h)
		\begin{pmatrix}
			\breve\alpha_1^\flat\\
			\breve\alpha_2^\flat
		\end{pmatrix}
	\end{equation}
	We call \textbf{transfer matrix at} $\pmb\rho_\pm$ the matrix $T^\pm$. Moreover, $T^+$ is given for all $E \in \mc{R}_h$ by
	\begin{equation}\label{Eq:AsymptoticExpansionTransferMatrix}
		T^+(E,h) = I_2 -i h^{\frac{k+1}{m+1}}\begin{pmatrix}
			0&\overline{\omega}\\
			\omega&0
		\end{pmatrix}
		+ \BigO{h^{\frac{k+2}{m+1}}}
	\end{equation}
	where $\omega$ is given by \eqref{Eq:Omega}. In the case where both $k$ and $m$ are odd and $k+1 < m$ we have
	\begin{equation}\label{Eq:AsymptoticExpansionTransferMatrixOdd}
		T^+(E,h) = I_2 -i h^{\frac{k+2}{m+1}}\begin{pmatrix}
			0&\overline{\omega_{odd}}\\
			\omega_{odd}&0
		\end{pmatrix}
		+ \BigO{h^{\frac{k+3}{m+1}}}
	\end{equation}
	where $\omega_{odd}$ is given by \eqref{Eq:OmegaOdd}. Finally, $\omega$ and $\omega_{odd}$ do not depend on the point $\rho_j^\bullet$ chosen. 
\end{thm}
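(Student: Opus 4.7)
The plan is to (i) construct explicit WKB microlocal solutions on each incoming and outgoing branch, (ii) express the off-diagonal entries of $T^+$ as an oscillatory integral via a first-Born expansion in the small coupling $hU$, and (iii) apply the degenerate stationary phase expansion of Appendix \ref{Appendix:DegenerateStationaryPhaseEstimates}.

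The two-dimensionality of the space of microlocal solutions near $\rho^\pm$ comes from the fact that each $P_j-E$ is of real principal type on $\Gamma_j\setminus\ms{C}$, so its microlocal kernel on each arc is one-dimensional, combined with the fact that $hU$ is of order $h$ and therefore decouples the two branches at leading order near the crossing: the incoming data $(\alpha_1^\flat,\alpha_2^\flat)$ can thus be prescribed freely and determine $T^\pm$ unambiguously. The diagonal entries of $T^+$ equal $1$ up to the announced remainder by scalar propagation of singularities applied to the uncoupled principal part, so only the off-diagonal entries require computation.

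To extract them I would set $(\alpha_1^\flat,\alpha_2^\flat)=(1,0)$, so that $u_1\equiv f_1^\flat$ near $\gamma_1^\flat$ and $u_2$ solves $(P_2-E)u_2=-hr_0 u_1$ with no microlocal data on $\gamma_2^\flat$. Inverting $P_2-E$ microlocally along the outgoing branch and pairing the result with the WKB Ansatz \eqref{Eq:frho} at a point $\rho_2^\sharp\in\gamma_2^\sharp$ yields, after a standard cutoff argument, a representation
\begin{equation*}
\alpha_2^\sharp = \frac{c_0}{\sqrt{E_0}}\int_\Rr e^{i\Phi(x)/h}\,r_0(x)\,\chi(x)\,dx + \BigO{h^{(k+2)/(m+1)}},\qquad \Phi := \phi_1-\phi_2,
\end{equation*}
where $c_0$ is an explicit universal constant and $\chi\in C_c^\infty(\Rr)$ is equal to $1$ near $0$; the normalization $E_0^{-1/2}$ arises from the product of the two WKB amplitudes $(E_0-V_j(x))^{-1/4}$ evaluated at $x=0$, and the dependence on $\rho_2^\sharp$ is absorbed into the remainder, which proves that $\omega$ does not depend on the auxiliary base points. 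The entry $T^+_{12}$ is obtained by a symmetric computation, producing the complex conjugate $\overline{\omega}$ through conjugation of the phase.

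From $\phi_j'(x)=\sqrt{E_0-V_j(x)}$ together with \eqref{Eq:ContactOrderOfPotentials} and $V_1(0)=V_2(0)=0$, one computes $\Phi(x)=\frac{v_m}{2(m+1)!\sqrt{E_0}}x^{m+1}+O(x^{m+2})$, so $\Phi$ has a critical point of order exactly $m+1$ at $0$, while $r_0$ vanishes at order $k$ with leading Taylor coefficient $r^{(k)}(0)/k!$. The rescaling $x=h^{1/(m+1)}y$ brings the integral to the model treated in Appendix \ref{Appendix:DegenerateStationaryPhaseEstimates}; reading off the $\pmb\Gamma\bigl((k+1)/(m+1)\bigr)$ factor, the powers of $|v_m|$ and $E_0$, and the trigonometric prefactor $\mu_{k,m}$ (which encodes the coherent sum of the contributions from $y<0$ and $y>0$, with the relative sign governed by the parities of $k$ and $m+1$) gives \eqref{Eq:AsymptoticExpansionTransferMatrix}. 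In the doubly-odd regime the identity $\mu_{k,m}\equiv 0$ kills the leading term, and the subleading order $h^{(k+2)/(m+1)}$ is driven by three corrections of the same size---namely the $x^{k+1}$ coefficient of $r_0$, the $x^{m+2}$ coefficient of $\Phi$ (which contributes $v_{m+1}$ and $V_1'(0)+V_2'(0)$), and the first subprincipal WKB amplitude at $x=0$---which reassemble through the refined expansion of the appendix into $\omega_{odd}$ and the coefficient $\nu$ of \eqref{Eq:nu}. The main technical obstacle is precisely this last bookkeeping step: keeping track of the several subprincipal contributions and showing that they combine into the single clean expression \eqref{Eq:OmegaOdd} is what the refined degenerate stationary phase estimates of the appendix are designed to make tractable.
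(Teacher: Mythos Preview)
Your strategy---Born expansion in the coupling followed by the degenerate stationary phase of Appendix~\ref{Appendix:DegenerateStationaryPhaseEstimates}---is the paper's, but there is one substantive gap: you never justify the remainder $O(h^{(k+2)/(m+1)})$ (respectively $O(h^{(k+3)/(m+1)})$) that you attach to the first Born term and to the diagonal entries.

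The smallness of the higher Born iterates does \emph{not} come from the explicit $h$ in $hU$: inverting $P_j-E$ by variation of constants costs a factor $h^{-1}$ through the Wronskian of $u_j,\breve u_j$, and after this cancellation each iterate is an oscillatory integral of size $O(1)$ a priori. The paper resolves this by working with \emph{exact} scalar solutions $u_j,\breve u_j$ on a fixed interval $I\ni 0$ (not microlocal WKB data), defining the integral operators $\widetilde{K}_{j,S}$ of \eqref{Eq:DefinitionIntegralOperators}, and proving in Proposition~\ref{Prop:IntegralOperators} that $\|\widetilde{K}_{j,S}\widetilde{K}_{\hat j,S}\|_{\mc{B}(\C(I))}=O(h^{(k+1)/(m+1)})$ via the \emph{variable-endpoint} estimate \eqref{Eq:DegenerateStationaryPhase} of Lemma~\ref{Lemma:DegenerateStationaryPhase} (this is also where the hypothesis $k\le m$ is actually used). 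That single operator bound is what makes the Neumann series $\sum_{l\ge 0}(\widetilde K_{j,S}\widetilde K_{\hat j,S})^l$ converge, pins the diagonal entries to $1+O(h^{2(k+1)/(m+1)})$, and reduces the off-diagonal entry to the single integral $\tfrac12\int_{-\delta}^\delta \breve u_2\, r_0\, u_1$ up to admissible error (Lemma~\ref{Lemma:AlphaBetaEstimates}). Your ``scalar propagation of singularities'' argument for the diagonal and your bare claim $+\,O(h^{(k+2)/(m+1)})$ for $\alpha_2^\sharp$ both hide exactly this step.

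A smaller inaccuracy: in the doubly-odd case you attribute one of the three subleading corrections to the ``first subprincipal WKB amplitude'', but the $(V_1'+V_2')(0)$ contribution to $\nu$ in \eqref{Eq:nu} actually arises from the $x$-Taylor expansion of the \emph{principal} amplitude $(E-V_1)^{-1/4}(E-V_2)^{-1/4}$, not from the order-$h$ transport correction; since $k+1<m$ forces $h=O(h^{(k+3)/(m+1)})$, the genuine $h$-subprincipal terms are absorbed in the remainder.
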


\noindent
The asymptotic expansion in Theorem \ref{Thm:AsymptoticConnectionFormula} will be proved in the next subsection. 
A symmetric expansion for the transfer matrix $T^-$ at $\rho_-$ is obtained from \eqref{Eq:SymmetryTransferMatrices}. Concerning the first part of the theorem (dimension of the space of microlocal solutions), we refer to \cite[Section 3.4]{AFH}.

\subsection{Microlocal connection formula at a crossing point}\label{Subsection:ConnectionFormula}

Let us now turn to the proof of Theorem \ref{Thm:AsymptoticConnectionFormula}. We study the behavior of microlocal solutions on a small, compact, $h$-independent interval $I$ around the $x$-coordinate $x = 0$ of the crossing points. In the phase space, this means that we consider $\Gamma_j \cap (I \times \Rr_\xi)$. 
Unfortunately, we cannot directly use a WKB method as in \eqref{Eq:frho} to construct microlocal solutions near crossing points. We follow the idea of \cite{FMW1} and construct exact solutions for the scalar equations of $(P_j-E)u_j = 0$ based on Picard's successive approximation method (see for example \cite[Chapter 5]{Olv}). 
We obtain pairs ($u_j$, $\breve u_j$) of solutions to $(P_j-E)u_j = 0$ in I 
which asymptotically behave as
\begin{equation}\label{Eq:AsymptoticBehaviorScalarSolutions}
	u_j(x) = \frac{1+\BigO{h}}{\sqrt[4]{E-V_j(x)}}e^{\frac{i}{h}\int_0^y \sqrt{E - V_j(y)}dy}, \quad 
	\breve u_j(x) = \frac{1+\BigO{h}}{\sqrt[4]{E-V_j(x)}}e^{-\frac{i}{h}\int_0^y \sqrt{E - V_j(y)}dy}
\end{equation}
where $j \in \{1,2\}$. This asymptotic behavior when $h\to 0^+$ is uniform on $I$. Next, we use those scalar solutions to construct vector-valued solutions $w$ of the system $(P-E)w = 0$. 
Those solutions, constructed with the appropriate 
integral operators, are close at order $\BigO{h}$ to the solutions $f_j^\bullet$ and small at order $\BigO{h}$ elsewhere (see \cite[Section 4]{FMW1}). Essentially, this construction relies on the estimates of Proposition \ref{Prop:IntegralOperators} below.

We consider the Banach space $\C(I) := \C(I,\Cc)$ of continuous functions on $I$ endowed with the $L^\infty$ norm. We fix a small $\delta >0$ and define for $f \in \C(I)$ and $x \in I$ the integral operators $\widetilde{K}_{j,S}$ by
\begin{equation}\label{Eq:DefinitionIntegralOperators}
	\widetilde{K}_{j,S}f(x) := -\frac{1}{h\mc{W}(u_j,\breve u_j)}\left(\breve u_j(x)\int_{I_S(x)} u_j(y)r_0(y)f(y)dy - u_j(x)\int_{I_S(x)} \breve u_j(y)r_0(y)f(y)dy\right)
\end{equation}
where $I_R(x) := [\delta,x]$ and $I_L(x) := [-\delta,x]$ with $j \in \{1,2\}$ and $S \in \{L,R\}$. Here $\mc{W}$ stands for the wronskian and one can check that $\mc{W}(u_j,\breve u_j) = 2ih^{-1}$. The next proposition is a refinement of \cite[Proposition 3.10]{AFH}.

\begin{prop}\label{Prop:IntegralOperators}
	For all $j \in \{1,2\}$ and $S \in \{L,R\}$, $\widetilde{K}_{j,S}$ is a bounded operator of $\C(I)$. Moreover, 
	\begin{equation}\label{Eq:EstimateOnKernelsKtilde}
		\norm{\widetilde{K}_{j,S}\widetilde{K}_{{\hat j},S}}{\mc{B}(\C(I))} = \BigO{h^{\frac{k+1}{m+1}}}.
	\end{equation}
\end{prop}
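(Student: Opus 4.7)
Boundedness of each $\widetilde K_{j,S}$ on $\C(I)$ is straightforward. On the fixed compact $I$, the quantity $E-V_j(x)$ stays bounded away from $0$ (since $V_j(0)=0<E_0$ and $|E-E_0|\le h\delta_0$ is small), so by \eqref{Eq:AsymptoticBehaviorScalarSolutions} the WKB solutions $u_j$ and $\breve u_j$ are uniformly bounded on $I$. A crude pointwise estimate then gives
\[
\norm{\widetilde K_{j,S} f}{\infty} \le C\,|I|\,\norm{r_0}{\infty}\,\norm{f}{\infty},
\]
for a constant $C$ independent of $h$, which handles the first part of the proposition.

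For the composition, I would expand $\widetilde K_{j,S}\widetilde K_{\hat j,S}f(x)$ by inserting definition \eqref{Eq:DefinitionIntegralOperators} twice, producing four terms indexed by $(\epsilon_1,\epsilon_2)\in\{+,-\}^2$ of the schematic form
\[
w_j^{\epsilon_1}(x)\int_{I_S(x)}w_j^{-\epsilon_1}(y)\,r_0(y)\,w_{\hat j}^{\epsilon_2}(y)\int_{I_S(y)}w_{\hat j}^{-\epsilon_2}(z)\,r_0(z)\,f(z)\,dz\,dy,
\]
where $w_j^+:=u_j$ and $w_j^-:=\breve u_j$. A Fubini exchange of the two integrations lets me rewrite this as an integral over $z\in I_S(x)$ of $f(z)\,r_0(z)\,w_{\hat j}^{-\epsilon_2}(z)$ against the inner oscillatory integral
\[
\mc{I}_{\epsilon_1,\epsilon_2}(z,x):=\int_z^x r_0(y)\,a(y,h)\,e^{i\psi(y)/h}\,dy,
\]
with $a(y,h)=[(E-V_j)(E-V_{\hat j})]^{-1/4}(1+O(h))$ smooth on $I$ and $\psi(y):=-\epsilon_1\phi_j(y)+\epsilon_2\phi_{\hat j}(y)$. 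Since all outer factors are $O(1)$ uniformly and $r_0 f \in L^\infty(I)$, the claim $\norm{\widetilde K_{j,S}\widetilde K_{\hat j,S}}{\mc{B}(\C(I))}=O(h^{(k+1)/(m+1)})$ reduces to a uniform bound $|\mc{I}_{\epsilon_1,\epsilon_2}(z,x)|=O(h^{(k+1)/(m+1)})$ on $I\times I$.

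I would then split into two subcases according to whether the two exponentials reinforce or cancel. If $\epsilon_1=-\epsilon_2$, the phase is $\pm(\phi_j+\phi_{\hat j})$ with $\psi'(0)=\pm2\sqrt{E_0}\ne0$; shrinking $I$ if necessary, $\psi'$ is bounded away from $0$ on $I$ and standard non-stationary integration by parts gives $|\mc{I}_{\epsilon_1,\epsilon_2}(z,x)|=O(h)$, which is stronger than the target since $k<m$. If instead $\epsilon_1=\epsilon_2$, the phase is $\pm(\phi_j-\phi_{\hat j})$, whose derivative $\pm(\sqrt{E-V_j}-\sqrt{E-V_{\hat j}})$ vanishes at $y=0$ at exact order $m$ by Assumption \ref{Assumption:FiniteOrderForPotentials}; hence $\psi$ has a degenerate critical point of order $m+1$ at $0$, while the amplitude $r_0\cdot a$ vanishes there at exact order $k$. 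The degenerate stationary phase estimate from Appendix \ref{Appendix:DegenerateStationaryPhaseEstimates} then yields $|\mc{I}_{\epsilon_1,\epsilon_2}(z,x)|=O(h^{(k+1)/(m+1)})$, and summing the four contributions finishes the proof. The main technical obstacle is ensuring that this degenerate stationary phase bound is \emph{uniform} in the endpoints $z,x\in I$ (which may lie on either side of, or coincide with, the critical point $0$); this uniformity is precisely the refinement over \cite{AFH} that the appendix is designed to supply.
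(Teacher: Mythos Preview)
Your argument is correct and rests on the same key tool as the paper (Lemma~\ref{Lemma:DegenerateStationaryPhase}), but the organization differs in one point worth recording. You apply Fubini to isolate an inner oscillatory integral $\int_z^x r_0(y)\,a(y,h)\,e^{i\psi(y)/h}\,dy$ with \emph{two} variable endpoints, then split into the non-stationary subcase $\epsilon_1=-\epsilon_2$ (phase $\pm(\phi_j+\phi_{\hat j})$, handled by one integration by parts) and the degenerate subcase $\epsilon_1=\epsilon_2$ (phase $\pm(\phi_j-\phi_{\hat j})$, handled by the appendix). The paper instead avoids Fubini altogether: it applies Lemma~\ref{Lemma:DegenerateStationaryPhase} directly to the \emph{outer} $y$-integral, absorbing the inner integral $F(y)=\int_{I_S(y)}u_2(z)r_0(z)f(z)\,dz$ into the amplitude $a(y)=a_1(y)\breve a_2(y)r_0(y)F(y)$, which still vanishes to order $k$ at $y=0$ thanks to the factor $r_0$. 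The paper's route matches the statement of Lemma~\ref{Lemma:DegenerateStationaryPhase} verbatim (one fixed endpoint, one variable endpoint) and only needs $F\in C^1$ with $\|F\|_\infty+\|F'\|_\infty\lesssim\|f\|_\infty$; your route separates the $f$-dependence cleanly from the stationary-phase analysis but requires the harmless reduction $\int_z^x=\int_\alpha^x-\int_\alpha^z$ to fit the one-variable-endpoint form of the lemma. Both are valid, and your explicit treatment of the non-stationary terms is more detailed than the paper's one-line ``the other integrals can be treated in a symmetric way''.
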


\begin{proof}
	First, $\widetilde{K}_{j,S}$ is bounded since $r_0$, $u_j$ and $\breve u_j$ are bounded on $I$. We reduce the proof to the study of 
	\[
	\int_{-\delta}^x u_1(y) r_0(y) \breve u_2(y) \left(\int_{-\delta}^y u_2(z) r_0(z) f(z) dz\right) dy
	\]
	for $f \in \C(I)$, because the other integrals appearing when expanding $\widetilde{K}_{j,S}\widetilde{K}_{{\hat j},S}f$ can be treated in a symmetric way. We first note $a_j$ (resp. $\breve a_j$) the amplitude of $u_j$ (resp. $\breve u_j$) defined in \eqref{Eq:AsymptoticBehaviorScalarSolutions}. Both $a_j$ and $\breve a_j$ behave as $\frac{1+\BigO{h}}{\sqrt[4]{E-V_j(y)}}$ when $h \to 0^+$.
	We then apply \eqref{Eq:DegenerateStationaryPhaseEstimate} of Lemma \ref{Lemma:DegenerateStationaryPhase} with
	$a(y) := {a}_1(y)\breve{a}_2(y)r_0(y)F(y), \quad$ $\phi(y) := \int_0^y \sqrt{E-V_1(z)} - \sqrt{E-V_2(z)}dz$
	and $F(y):= \int_{-\delta}^y u_2(z) r_0(z) f(z) dz$. 
\end{proof}

\begin{rmk}
	In practice, we work with $h$-dependent amplitudes (see \eqref{Eq:AsymptoticBehaviorScalarSolutions}). However they are always of the form $a(y,h) = a_0(y) + ha_1(y)$, where $a_0$ and $a_1$ are smooth functions, so that the Lemma \ref{Lemma:DegenerateStationaryPhase} can still be applied.
\end{rmk}

Now, following the method of \cite[Section 4]{FMW1}, we construct two basis $(\breve w_1^\sharp,w_1^\flat,\breve w_2^\sharp,w_2^\flat)$ and $(\breve w_1^\flat,w_1^\sharp,\breve w_2^\flat,w_2^\sharp)$ of exact solutions on $I$ to $(P-E)w = 0$  and a matrix $A = (a_{i,j}(E,h))_{1 \leq i,j \leq 4}$ such that
\[
(\breve w_1^\sharp,w_1^\flat,\breve w_2^\sharp,w_2^\flat) = A(\breve w_1^\flat,w_1^\sharp,\breve w_2^\flat,w_2^\sharp)
\]
and
\[
T^+ = \begin{pmatrix}
	a_{22}&a_{24}\\
	a_{42}&a_{44}\\
\end{pmatrix} + \BigO{h}.
\]
\noindent
More precisely, $A$ is of the form $A = (I_4-A_R)^{-1}(I_4-A_L)$, where the coefficients of $A_S$, $S \in \{R,L\}$, are of one of the two following types:
\begin{align}
	\alpha_{j,S}(f) := \pm\frac{i}{2} \int_{I_S(0)} \breve u_{\hat j} U_{\hat{j}} J_{j,S} f \stext{ or } \beta_{j,S}(f) := \pm\frac{i}{2} \int_{I_S(0)} \breve u_{\hat j} U_{j} \widetilde{K}_{\hat j,S} J_{j,S} f
\end{align}
where $J_{j,S}:= \sum_{l \geq 0} \left(\widetilde{K}_{j,S}\widetilde{K}_{\hat{j},S}\right)^l$ (when defined), $j \in \{1,2\}$ and $I_S$ is defined in \eqref{Eq:DefinitionIntegralOperators}. The $\pm$ sign is $+$ for $S=R$ and $-$ for $S=L$. We define symmetrically $\breve\alpha_{j,S}(f)$ and $\breve\beta_{j,S}(f)$ replacing $\breve u_j$ by $u_j$. The proof of Theorem \ref{Thm:AsymptoticConnectionFormula} is then reduced to a stationary phase argument as summed up in the next lemma. 

\begin{lemma}\label{Lemma:AlphaBetaEstimates}
	The matrix $A$ is of the form $A = \begin{pmatrix}
		I_2 & \widetilde A_1\\
		\widetilde A_2 & I_2
	\end{pmatrix}
	+ \BigO{h^{\frac{k+2}{m+1}}}$
	where $\widetilde A_j = \begin{pmatrix}
		\breve F_j\breve u_j&\breve F_ju_j\\
		F_j\breve u_j& F_ju_j\\
	\end{pmatrix}$ and $F_j:=\alpha_{j,L}-\alpha_{j,R}$, $\breve F_j:=\breve\alpha_{j,L}-\breve\alpha_{j,R}$. In particular the coefficients of $T^+$ are given by $a_{22}=a_{44}=1$ and
	\begin{equation*}
		a_{42} = F_1u_1 = -i\omega h^{\frac{k+1}{m+1}}+\BigO{h^{\frac{k+2}{m+1}}},\quad
		a_{24} = F_2u_2 = -i\overline{\omega} h^{\frac{k+1}{m+1}}+\BigO{h^{\frac{k+2}{m+1}}},
	\end{equation*}
	where $\omega$ is given by \eqref{Eq:Omega}. In the case where $k$ and $m$ are odd and $k+m<1$, $A = \begin{pmatrix}
		I_2 & \widetilde A_1\\
		\widetilde A_2 & I_2
	\end{pmatrix}
	+ \BigO{h^{\frac{k+3}{m+1}}}$ and 
	\begin{equation*}
		a_{42} = F_1u_1 = -i\omega_{odd} h^{\frac{k+2}{m+1}}+\BigO{h^{\frac{k+3}{m+1}}},\quad
		a_{24} = F_2u_2 = -i\overline{\omega_{odd}} h^{\frac{k+2}{m+1}}+\BigO{h^{\frac{k+3}{m+1}}},
	\end{equation*}
	where $\omega_{odd}$ is given by \eqref{Eq:OmegaOdd}.
\end{lemma}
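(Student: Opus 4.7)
\medskip

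\textbf{Plan.} The strategy is to exploit the product form $A=(I_4-A_R)^{-1}(I_4-A_L)$ and expand as a Neumann series. The first step is a size classification of the coefficients of $A_S$. The $\alpha$-type coefficients $\alpha_{j,S}(f)$ are oscillatory integrals against the kernel $\breve u_{\hat j}\,r_0\,f$ (times $J_{j,S}=I+O(h^{(k+1)/(m+1)})$ from Proposition~\ref{Prop:IntegralOperators}), whose phase $\psi_j(y):=\phi_j(y)-\phi_{\hat j}(y)=\int_0^y(\sqrt{E-V_j}-\sqrt{E-V_{\hat j}})dz$ vanishes exactly to order $m+1$ at $y=0$ with leading coefficient $(-1)^{j+1}v_m/(2\sqrt{E}(m+1)!)$, while the amplitude carries the factor $r_0$ which vanishes to order $k$. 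Hence by Lemma~\ref{Lemma:DegenerateStationaryPhase} each $\alpha$-type entry is $O(h^{(k+1)/(m+1)})$. A $\beta$-type entry contains an additional $\widetilde K$, so by Proposition~\ref{Prop:IntegralOperators} it is $O(h^{(k+1)/(m+1)})$ \emph{further} smaller, i.e.\ $O(h^{2(k+1)/(m+1)})$. Consequently
\[
A = I_4 + (A_R-A_L) + \BigO{h^{2(k+1)/(m+1)}},
\]
and writing out $A_R-A_L$ block by block gives exactly the announced form, with the diagonal $2\times 2$ blocks equal to $I_2$ (because $r_0$ couples the two channels, so the surviving $\alpha$-entries lie in the off-diagonal blocks) and off-diagonal blocks $\widetilde A_j$ whose four entries are $\breve F_j\breve u_j,\ \breve F_j u_j,\ F_j\breve u_j,\ F_ju_j$.

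\medskip

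\textbf{Main computation (generic case).} After substituting $J_{j,S}=I+\BigO{h^{(k+1)/(m+1)}}$, one has, up to an admissible error,
\[
F_j u_j = \pm\frac{i}{2}\int_{-\delta}^{\delta} \breve u_{\hat j}(y)\,r_0(y)\,u_j(y)\,dy.
\]
Inserting the WKB expressions~\eqref{Eq:AsymptoticBehaviorScalarSolutions} rewrites this as an oscillatory integral with phase $\psi_j/h$ and amplitude $a_j(y)=r_0(y)\,[(E-V_j)(E-V_{\hat j})]^{-1/4}(1+O(h))$. Apply Lemma~\ref{Lemma:DegenerateStationaryPhase} with vanishing order $m+1$ of the phase and vanishing order $k$ of the amplitude: it delivers the factor $h^{(k+1)/(m+1)}$, the Gamma factor $\pmb\Gamma((k+1)/(m+1))$, a normalisation $(2(m+1)!/|v_m|)^{(k+1)/(m+1)}$ from the leading coefficient of the phase, the factor $E_0^{(k-m)/(2(m+1))}$ from the amplitude at $y=0$, and the combinatorial coefficient $r^{(k)}(0)/k!$. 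The parity of $k$ and of $m+1$, together with the sign of $v_m$, are encoded in the phase factor $\mu_{k,m}\!\bigl(\tfrac{\sgn(v_m)(k+1)}{2(m+1)}\pi\bigr)$ arising from the contributions of $y>0$ and $y<0$. Matching with~\eqref{Eq:Omega} identifies the prefactor as exactly $-i\omega$. The identity $F_2u_2=-i\overline\omega\,h^{(k+1)/(m+1)}+O(h^{(k+2)/(m+1)})$ then follows because swapping $j\leftrightarrow\hat j$ changes $\psi_j\mapsto-\psi_j$ and exchanges $u_j\leftrightarrow\breve u_{\hat j}$, hence conjugates the leading constant.

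\medskip

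\textbf{Odd case.} When both $k$ and $m$ are odd, $(-1)^k=-1$ and $(-1)^{m+1}=1$, so $\mu_{k,m}(\theta)\equiv 0$ and the previous leading term vanishes identically. One must therefore carry the degenerate stationary phase expansion to the next order. Two corrections contribute to the $h^{(k+2)/(m+1)}$ coefficient: the $y^{k+1}$ Taylor coefficient of $r_0$, giving the term $r^{(k+1)}(0)$; and subleading corrections to the amplitude $(E-V_j)^{-1/4}(E-V_{\hat j})^{-1/4}$ at $y=0$ (which involve $V_1'(0)+V_2'(0)$) together with the $y^{m+2}$ correction to $\psi_j$ coming from $v_{m+1}=(V_2-V_1)^{(m+1)}(0)$, the two combining into the correction coefficient $\nu$ of~\eqref{Eq:nu} multiplying $r^{(k)}(0)$. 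The hypothesis $k+1<m$ ensures that this next term still dominates any remainder coming from $J_{j,S}-I$ or from the $\beta$-type entries. Matching yields the claimed formula $F_1 u_1=-i\omega_{odd}\,h^{(k+2)/(m+1)}+\BigO{h^{(k+3)/(m+1)}}$, and complex conjugation again gives $F_2u_2$.

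\medskip

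\textbf{Expected main obstacle.} The routine size estimates and the Neumann expansion present no real difficulty once Proposition~\ref{Prop:IntegralOperators} is granted. The substantive work lies in the explicit matching of constants with~\eqref{Eq:Omega} and~\eqref{Eq:OmegaOdd}, especially in the odd case: the cancellation of the principal term must be coupled with a clean bookkeeping of the next-order contributions coming simultaneously from the amplitude and the phase, so as to produce the particular combination $\nu r^{(k)}(0)+r^{(k+1)}(0)$ and the correct phase factor $\mu_{k,m}\!\bigl(\tfrac{\sgn(v_m)(k+2)}{2(m+1)}\pi\bigr)$. This is precisely where the sharp version of the degenerate stationary phase lemma of Appendix~\ref{Appendix:DegenerateStationaryPhaseEstimates} is needed.
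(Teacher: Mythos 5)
Your plan reproduces the paper's overall strategy: Neumann expansion of $A=(I_4-A_R)^{-1}(I_4-A_L)$, identification of the leading $l=0$ contribution $\tfrac12\int_{-\delta}^{\delta}\breve u_{\hat j}\,r_0\,u_j$, and extraction of $\omega$ (resp.\ $\omega_{odd}$) by matching constants against Lemma~\ref{Lemma:DegenerateStationaryPhase}. The constant matching, the explanation of the conjugate relation $F_2u_2=-i\overline{\omega}\,h^{(k+1)/(m+1)}+\cdots$, and the account of the odd-case corrections ($r^{(k+1)}(0)$, amplitude corrections via $(V_1+V_2)'(0)$, phase correction via $v_{m+1}$) are all consistent with the paper.

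However, the error control is where your argument has a genuine gap, and it is precisely the point the paper's proof spends most of its effort on. First, the assertion that a $\beta$-type entry is \emph{``$O(h^{(k+1)/(m+1)})$ further smaller''} because it \emph{``contains an additional $\widetilde K$''} is not a valid deduction: a single $\widetilde K_{j,S}$ is only $O(1)$-bounded on $\C(I)$, and the smallness in Proposition~\ref{Prop:IntegralOperators} is specific to the \emph{composition} $\widetilde K_{j,S}\widetilde K_{\hat j,S}$. Second and more seriously, when you write
\[
F_ju_j \;=\; \pm\tfrac{i}{2}\int_{-\delta}^{\delta}\breve u_{\hat j}\,r_0\,u_j \;+\; \text{``admissible error''},
\]
the error you are discarding is $\pm\tfrac{i}{2}\int_{I_S(0)}\breve u_{\hat j}\,r_0\,(J_{j,S}-I)u_j$, and the operator-norm bound $\norm{J_{j,S}-I}{\mc B(\C(I))}=O(h^{(k+1)/(m+1)})$ only yields $O(h^{(k+1)/(m+1)})$ for this integral — the \emph{same} size as the leading term, not a sub-leading remainder. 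The paper's proof circumvents this by a two-step use of Lemma~\ref{Lemma:DegenerateStationaryPhase} on the $l=1$ term: first one shows that $(\widetilde K_{1,S}\widetilde K_{2,S})u_1(y)=C_S(y)h^{(k+1)/(m+1)}+\cdots$ with $C_S$ a \emph{smooth} function of $y$, so that the outer integrand $\breve u_2\,r_0\,C_S\,u_1$ again vanishes to order $k$ at the stationary point, and a second application of Lemma~\ref{Lemma:DegenerateStationaryPhase} then gains the extra factor $h^{(k+1)/(m+1)}$, giving $O(h^{2(k+1)/(m+1)})\subset O(h^{(k+2)/(m+1)})$. The terms $l\ge2$ are then handled by the brutal operator bound. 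In the odd case, the paper notes that the same refined argument must be carried out for $l=2$ as well (with only $l\ge3$ disposed of brutally), because the target error is $O(h^{(k+3)/(m+1)})$. Your proof would need to incorporate this two-step stationary-phase argument for the tail of the Neumann series; as written it does not actually show that the discarded contributions are smaller than the leading term.
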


\begin{proof}
	We look at $F_1u_1$ ; the other terms can be treated in a symmetric way. We have	
	\begin{equation}\label{Eq:F1u1}
		F_1u_1 = \frac{-i}{2}
		\left[\int_{-\delta}^\delta \breve u_2r_0u_1\right.\notag 
		+ \sum_{l \geq 1} \left(
		\int_{-\delta}^0 \breve u_2 r_0 (\widetilde{K}_{1,L} \widetilde{K}_{2,L})^lu_1 
		+ \int_{0}^\delta \breve u_2 r_0 (\widetilde{K}_{1,R} \widetilde{K}_{2,R})^lu_1 
		\right)
	\end{equation}
	where, as in the previous proposition, $a_j$ and $\breve a_j$ are the amplitudes of $u_j$ and $\breve u_j$ defined in \eqref{Eq:AsymptoticBehaviorScalarSolutions}.
	We start with the case where either $k$ or $m$ is even. We split \eqref{Eq:F1u1} into three parts:  first, the terms for $l \geq 2$ can be brutally bounded. Indeed, in view of Proposition \ref{Prop:IntegralOperators}, the geometric series $J_{j,S}$ is convergent for $h > 0$ small enough and $\sum_{l \geq 2} \norm{\left(\widetilde{K}_j\widetilde{K}_{\hat j}\right)^l}{\mc{B}(\C(I))} = \BigO{h^{2(k+1)/(m+1)}}$ where $\mc{B}(\C(I))$ is the Banach space of the bounded operators on $\C(I)$. 
	Next, the term for $l=1$ can be treated in a similar way as in Proposition \ref{Prop:IntegralOperators}. 
	A first application of \eqref{Eq:DegenerateStationaryPhaseEstimate} of Lemma \ref{Lemma:DegenerateStationaryPhase} while keeping only the term of order $h^{(k+1)/(m+1)}$ gets us smooth functions $C_S(y,h)$ bounded in $h$ such that $(\widetilde{K}_{1,S} \widetilde{K}_{2,S})u_1 (y) =  C_S(y,h)h^{(k+1)/(m+1)}$. We then apply a second time the same result and obtain the claimed estimate. 
	In a similar way, when $k$ and $m$ are odd, the term corresponding to $l=2$ is treated using \eqref{Eq:DegenerateStationaryPhaseEstimate} of Lemma \ref{Lemma:DegenerateStationaryPhase}, and the terms corresponding to $l \geq 3$ can be dealt with using Proposition \ref{Prop:IntegralOperators}.
	
	Finally, $\omega$ (respectively $\omega_{odd}$) is the first term of the asymptotic expansion of $\frac{1}{2}\int_{-\delta}^{\delta} \breve u_2 r_0 u_1$, which corresponds to the term $l=0$. We use \eqref{Eq:DegenerateStationaryPhase3Cases} of Lemma \ref{Lemma:DegenerateStationaryPhase} with $\phi(y) := \int_0^y \sqrt{E-V_1(z)} - \sqrt{E-V_2(z)}dz$ and $a := a_1\breve a_2$. We write a Taylor expansion at $y=0$ for both functions; we get
	\[
		\phi(y) = \frac{1}{2\sqrt{E}} \left[ v_m\frac{y^{m+1}}{(m+1)!} + \left( v_{m+1} + \frac{(m+1)v_m(V_1+V_2)'(0)}{4}\frac{y^{m+2}}{(m+2)!}\right)\right] 
		 + \BigO{y^{m+3}}
	\]
	and
	\[
		a(y) = \frac{1}{\sqrt{E}} \left[ r_0^{(k)}(0)\frac{y^{k}}{k!} + \left((k+1)\frac{(V_1+V_2)'(0)}{4}r_0^{(k)}(0) + r_0^{(k+1)}(0)\right)\frac{y^{k+1}}{(k+1)!}\right]
		+ \BigO{y^{k+2}}.
	\]
	To conclude, we use the fact $E = E_0 + \BigO{h}$ and the assumption $k < m$ so that $h = \Smallo{h^{(k+1)/(m+1)}}$ is negligible compared to the principal term $-i\omega h^{(k+1)/(m+1)}$ (in the odd case, we use $k < m+1$ for the same reason). 
\end{proof}
\begin{rmk}
	At this point, we can compute 
	$T^- = \begin{pmatrix}
		a_{11}&a_{13}\\
		a_{31}&a_{33}\\
	\end{pmatrix} + \BigO{h}$
	with symmetrical computations as previously. However, knowing $T^+$ we can also directly deduce $T^-$ by remarking that 
	\begin{equation}\label{Eq:SymmetryTransferMatrices}
		T^-(E,h) = \left(\overline{T^+(\overline{E},h)}\right)^{-1}.
	\end{equation}
\end{rmk}


\section{Proof of the main results}\label{Section:ProofOfMainResult}
We first recall \cite[Proposition 7.1]{FMW3}). If $E$ is a resonance of $P$ and $u$ an associated non-trivial resonant state, then for all $x_0 < a_0$ and with the notations of Theorem \ref{Thm:AsymptoticConnectionFormula},
\begin{equation}\label{Eq:Im(E)ExpressedWithuandalpha}
\Im(E) = -\frac{|\breve \alpha_2^\sharp|^2}{\norm{u}{L^2(x_0, +\infty)}^2}h(1+\BigO{h}).
\end{equation}

The proof is then reduced to computing an asymptotic expansion of $|\breve \alpha_2^\sharp|^2$ and $\norm{u}{L^2(x_0, +\infty)}$.

\begin{proof}[Proof of Theorem \ref{Thm:MainResult}]
For the time being, we assume that \eqref{Eq:zh(E)} holds true. Let $E \in \Res_h(P)$ be a resonance of $P$ and $u$ a non-trivial outgoing resonant state. We take back the notations from the microlocal connection formula (Theorem \ref{Thm:AsymptoticConnectionFormula}) and write
\begin{equation}\label{Eq:NormalizationForResonantState}
u \equiv \alpha_j^\bullet f_j^\bullet \stext{near} \rho_+ \stext{and} u \equiv \breve\alpha_j^\bullet  \breve f_j^\bullet \stext{near} \rho_-.
\end{equation}
Since $u$ is outgoing, we have $\alpha_2^\flat = 0$ (see Remark \ref{Rmk:OutgoingStates}). Furthermore, $u$ is non-trivial so we do not have $u \equiv 0$ near $\Gamma$. In particular $\alpha_1^\flat \neq 0$ (using the transfer matrix and Maslov correction on $\Gamma$, $\alpha_1^\flat = 0$ would imply $u \equiv 0$ near $\Gamma$). We can then assume $\alpha_1^\flat = 1$ even if it means replacing $u$ by $(\alpha_1^\flat)^{-1}u$. We now  compute $\breve \alpha_2^\sharp$. Using the transfer matrices $T^\pm$; we get
\[
\begin{pmatrix}
\alpha_1^\sharp\\
\alpha_2^\sharp
\end{pmatrix}
=
T^+ \begin{pmatrix}
\alpha_1^\flat\\
\alpha_2^\flat
\end{pmatrix}
=
\begin{pmatrix}
	t_{11}^+\\
	t_{21}^+
\end{pmatrix}
\stext{and}
\begin{pmatrix}
	\breve\alpha_1^\sharp\\
	\breve\alpha_2^\sharp
\end{pmatrix}
=
T^- \begin{pmatrix}
	\breve\alpha_1^\flat\\
	\breve\alpha_2^\flat
\end{pmatrix}.
\]
From the equation involving $T^-$, we obtain
\[
\breve \alpha_2^\sharp = t_{21}^- \breve \alpha_1^\flat + t_{22}^- \breve \alpha_2^\flat.
\]
Now using the Maslov correction at the turning points $b_0$ and $a'_0$ (see \cite[Lemma 6.1]{FMW3}), we get
\[
\begin{cases}
	\breve \alpha_1^\flat e^{\frac{i}{h}\mc{S}_1 - i\frac{\pi}{2}} &= \alpha_1^\sharp = t_{11}^+\\
	\breve \alpha_2^\flat e^{\frac{i}{h}\mc{S}_2 - i\frac{\pi}{2}} &= \alpha_2^\sharp = t_{21}^+\\
\end{cases}
\]
where 
$\mc{S}_j = \mc{S}_j(E)$ is the classical action from $\rho_+$ to $\rho_-$ on $\Gamma_j$. 
This yields, according to Theorem \ref{Thm:AsymptoticConnectionFormula},
\begin{equation}\label{Eq;breve alpha_2sharp}
	\breve \alpha_2^\sharp = t_{21}^- t_{11}^+ e^{-\frac{i}{h}\mc{S}_1 + i\frac{\pi}{2}} + t_{22}^- t_{21}^+ e^{-\frac{i}{h}\mc{S}_2 + i\frac{\pi}{2}} = \widetilde{D}(E)h^{\frac{k+1}{m+1}} + \BigO{h^{\frac{k+2}{m+1}}}
\end{equation}
where $\widetilde{D}(E) := h^{-(k+1)/(m+1)}\left( t_{21}^- e^{-i\mc{S}_1/h + i\pi/2} + t_{21}^+ e^{-i\mc{S}_2/h + i\pi/2}\right)$. 
Now, we want to compute the term $\norm{u}{L^2(x_0, +\infty)}^2$ appearing in \eqref{Eq:Im(E)ExpressedWithuandalpha}. 
We claim that
\begin{equation}\label{Eq:NormOfu}
	\norm{u}{L^2(x_0, +\infty)}^2 = 2 \mc{A}'(E) + \BigO{h^{\frac{1}{3}} + h^{\frac{k+1}{m+1}}}.
\end{equation}
where $\mc{A}(E)$ is (the analytic continuation around $E=0 \in \Cc$ of) the area of the closed curve $\Gamma_1(E)$ defined in \eqref{Eq:a_0(E)}.
To see this, we use the asymptotic behavior of the WKB solutions. The asymptotic behavior in the variable $x$ is such that $u$ decays exponentially at $x \to +\infty$. For this reason, if we fix an arbitrary $x_1 > a'_0$ then the contribution to $\norm{u}{L^2(x_0, +\infty)}^2$ mainly comes from $[x_0,x_1]$. 
On this interval, the coordinates of $u =: (u_1,u_2)^T$ behave asymptotically in $h \to 0^+$ as
\begin{flalign*}
	\norm{u_1}{L^2((x_0,x_1))}^2 &= 2\mc{A}'(E) + \BigO{h^{\frac{1}{3}}}\\
	\norm{u_2}{L^2((x_0,x_1))}^2 &= \BigO{h^{\frac{k+1}{m+1}}}
\end{flalign*}
Essentially, the term in $h^{1/3}$ appearing in $u_1$ comes from turning points 
whereas the error on $u_2$ comes from an oscillating integral and \eqref{Eq:DegenerateStationaryPhaseEstimate} of Lemma \ref{Lemma:DegenerateStationaryPhase}. We can for example refer to \cite[Proposition 7.2]{FMW3} for more details. Finally, \eqref{Eq:NormOfu} holds and we use it in \eqref{Eq:Im(E)ExpressedWithuandalpha} to get
\[
\Im(E) = -D(E)h^{1+2\frac{k+1}{m+1}} + \BigO{h^{1+2\frac{k+1}{m+1}+s}}
\]
with $s := \min\left( 1/3, 1/(m+1)\right)$ and $D(E) := \va{\widetilde{D}(E)}^2 
\left(2 \mc{A}'(E)\right)^{-1}$.
The asymptotic expansion in the case where both $k$ and $m$ are odd and $k+1 < m$ is computed in a similar way. More precisely, according to the equation \eqref{Eq:AsymptoticExpansionTransferMatrixOdd} of Theorem \ref{Thm:AsymptoticConnectionFormula}, the equation \eqref{Eq;breve alpha_2sharp} becomes
\[
\breve \alpha_2^\sharp = \widetilde{D_{odd}}(E)h^{\frac{k+2}{m+1}} + \BigO{h^{\frac{k+3}{m+1}}}.
\]
with $\widetilde{D_{odd}}(E) := h^{-\frac{k+2}{m+1}}\left( t_{21}^- e^{-\frac{i}{h}\mc{S}_1 + i\frac{\pi}{2}} + t_{21}^+ e^{-\frac{i}{h}\mc{S}_2 + i\frac{\pi}{2}}\right)$. 
We obtain 
\[
\Im(E) = -D_{odd}(E)h^{1+2\frac{k+2}{m+1}} + \BigO{h^{1+2\frac{k+2}{m+1}+s}}
\]
with $D_{odd}(E) := \va{\widetilde{D_{odd}}(E)}^2 \left(2 \mc{A}'(E)\right)^{-1}$. 

Concerning the proof of \eqref{Eq:zh(E)}, we use the concept of pseudo-resonances introduced in \cite[Section 4]{AFH}. We construct what we call a \textit{monodromy matrix} $M(E,h)$ on $\Gamma \setminus (\gamma_2^\flat \cup \breve\gamma_1^\sharp)$ and define the pseudo-resonances to be the zeros $E$ in $\mc{R}_h$ of $\det(I - M(E,h))$. Essentially, those energies $E$ satisfy a quantization condition (asymptotically in $h$) similar to the Bohr-Sommerfeld quantization rule \eqref{Eq:Bohr-Sommerfeld quantization rule} in the following sense
: for each small $h>0$, there exists a bijection $\widetilde{z}_h$ from $\mathfrak{B}_h$ to the set of pseudo-resonances satisfying
\[
\widetilde{z}_h(E) = E + \BigO{h^{1+2\frac{k+1}{m+1}}}
\]
when $k$ or $m$ is even and
\[
\widetilde{z}_h(E) = E + \BigO{h^{1+2\frac{k+2}{m+1}}}
\]
when both $k$ and $m$ are odd and $k+1 < m$. The proof of this result relies on Theorem \ref{Thm:AsymptoticConnectionFormula}. Finally, a result relying on analytic distortion and the Kato-Rellich theorem (\cite[Proposition 4.10]{AFH}) ensures that pseudo-resonances approach resonances: for any $N \in \Nn$, there exists a bijection $z_h: \mathfrak{B}_h \to \Res_h(P)$ such that
\[
\va{\widetilde{z}_h(E) - z_h(E)} = \BigO{h^N}.
\]
Combining these two results, we obtain \eqref{Eq:zh(E)}.
\end{proof}

\section{Crossing at a turning point}\label{Section:typeB}
In this section, we generalize \cite[Remark 2.4]{AFH2} and establish a connection formula (Theorem \ref{Thm:AsymptoticConnectionFormulaTypeB}) in the case \ref{Case:II}. In this case, the characteristic sets $\Gamma_j(0)$ intersect at a unique crossing point $\rho = (0,0)$ which is also a turning point, as illustrated on Figure \ref{Fig:PhaseSpaceCrossingsTypeB} below. 

\begin{figure}[h]
	{\centering
		\begin{minipage}{0.50\textwidth}
		    \centering
			\includegraphics[width=7cm]{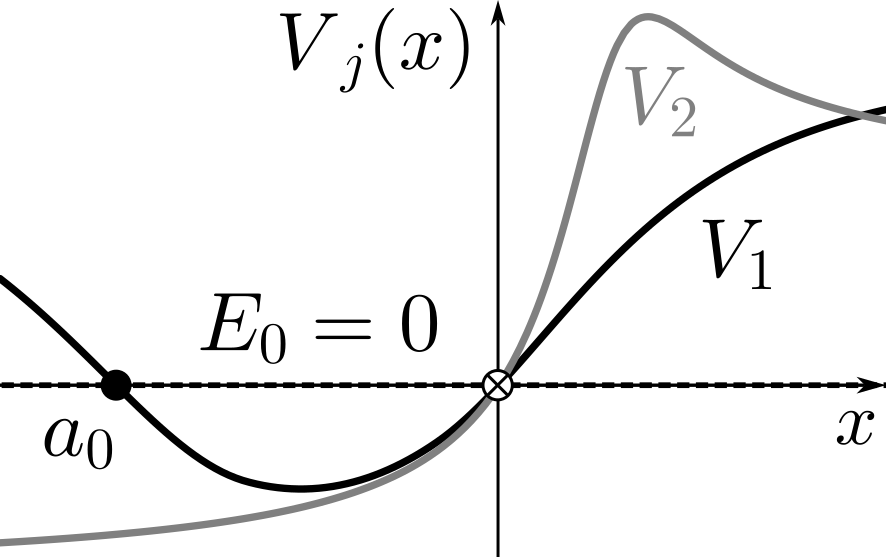}
			\caption{Potential crossing - case \ref{Case:II}}
			\label{Fig:PotentialsCrossingsTypeB}
		\end{minipage}\hfill
		\begin{minipage}{0.45\textwidth}
		    \centering
			\includegraphics[width=6.5cm]{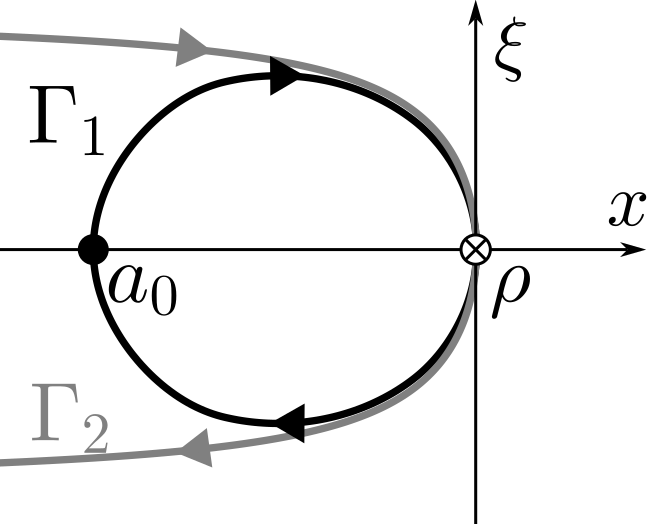}
			\caption{Associated phase space crossings - case \ref{Case:II}}
			\label{Fig:PhaseSpaceCrossingsTypeB}
		\end{minipage}
	}
\end{figure}

In this setting, the contact order of the characteristic sets $\Gamma_j(0)$ is $m = 2n$. 
Moreover, we fix $\delta_1 = \delta_1(h) := \Smallo{h^{4/(m+1)}}$ and $L > 0$ independent of $h$ and redefine 
\[
\mc{R}_h := [-\delta_1, \delta_1] + i [-hL, hL] \text{ and } \mathfrak{B}_h := \left\{E \in [ -\delta_1, \delta_1]; \; \cos\left( \frac{\mc{A}(E)}{2h} \right) = 0\right\}.
\]

\begin{thm}\label{Thm:MainResultTypeB}
Suppose that assumptions \ref{Assumption:AnalyticContinuation} up to \ref{Assumption:FiniteOrderInteraction} and case \ref{Case:II} hold true. Also assume that $m/2 \geq 2$. 
Then, for all small $h >0$ there exists a bijective map $z_h: \mathfrak{B}_h \to \Res_h(P)$ such that for any $E \in \mathfrak{B}_h$ one has 
\begin{equation}\label{Eq:zh(E)TypeB}
\va{z_h(E) - E} = O\left(h^{1+2\frac{2k+1}{m+1}}\right)
\end{equation}
and
\begin{equation}\label{Eq:Im(E)TypeB}
\Im z_h(E) = -\frac{\kappa^2}{\mc{A}'(0)}h^{1+2\frac{2k+1}{m+1}} + O\left(h^{1+2\frac{2k+1}{m+1} + \frac{1}{m+1}}\right)
\end{equation}
with 
$\kappa$ as defined in \eqref{Eq:Kappa}. 
\end{thm}

The proof of Theorem \ref{Thm:MainResultTypeB} essentially relies on Theorem \ref{Thm:AsymptoticConnectionFormulaTypeB} below. As in Section \ref{Section:ConnectionFormula}, we construct WKB solutions $f_j^\bullet$ of the equation $(P-E)f_j^\bullet \equiv 0$ near $\rho_j^\bullet$ away from $\rho$ on one hand, and exact solutions on a small interval $0 \in I$ on the other hand. 
We define
\begin{equation}\label{Eq:frhoTypeB}
f_j^\bullet(x) := 
e^{\frac{i}{h}\phi_j^\bullet(x)}\begin{pmatrix}
\sigma_{j,1}^\bullet(x,h)\\
\sigma_{j,2}^\bullet(x,h)
\end{pmatrix},
\quad
\sigma_{j,k}^\bullet(x,h) \sim \sum_{l=0}^{+\infty} h^l \sigma_{j,k,l}^\bullet(x,h)
\end{equation}
where the phases are defined for small $x < 0$ as
\[
\phi_j^\flat(x) := \int_{a_j(E)}^x\sqrt{E-V_j(y)}\,dy \stext{and} \phi_j^\sharp(x) := -\int_{a_j(E)}^x\sqrt{E-V_j(y)}\,dy
\]
and the amplitudes $\sigma_{j,k,l}^\bullet(x,h)$ satisfy
\[
\sigma_{j,j,0}^\bullet(x,h) = \left( E-V_j(y) \right)^{-1/4} \stext{and} \sigma_{j,\hat{j},0}^\bullet(x,h) = 0
\]
where $\{j, \hat{j}\} = \{1,2\}$ and $\bullet \in \{\flat, \#\}$. Here, $a_j(E)$ stands for the unique root $V_j(a_j(E)) = 0$ near $x=0$. Under Assumption \ref{Assumption:AnalyticContinuation}, $a_j(E)$ is analytic with respect to $E$, and thus so is $\phi_j^\bullet$. 

Then for any microlocal solution $w \in L^2(\Rr,\Cc^2)$ of $(P-E)w \equiv 0$ near $(0,0)$, there exist $\alpha_j^\bullet \in \Cc$ such that $w \equiv \alpha_j^\bullet f_j^\bullet \text{ near } \rho_j^\bullet$.

\begin{thm}\label{Thm:AsymptoticConnectionFormulaTypeB}
Let $E \in \mc{R}_h$. Under the same assumptions as Theorem \ref{Thm:MainResultTypeB}, the space of microlocal solutions $w \in L^2(\Rr, \Cc^2)$ of the equation $(P-E)w \equiv 0$ near the crossing point $\rho$ is of dimension $2$: there exists a $2 \times 2$ transfer matrix $T = (t_{i,j})$ depending on $E,h$ and $\rho_j^\bullet$ such that, for any microlocal solution $w \in L^2(\Rr,\Cc^2)$ of $(P-E)w \equiv 0$ near $\rho$ with $\norm{w}{L^2} \leq 1$,
\begin{equation}\label{Eq:TransferMatrixTypeB}
\begin{pmatrix}
\alpha_1^\sharp\\
\alpha_2^\sharp
\end{pmatrix}
= T(E,h)
\begin{pmatrix}
\alpha_1^\flat\\
\alpha_2^\flat
\end{pmatrix}.
\end{equation}
Moreover, 
\begin{equation}\label{Eq:AsymptoticExpansionTransferMatrixTypeB}
iT(E,h) = I_2 -i \kappa h^{\frac{2k+1}{m+1}}\begin{pmatrix}
0&1\\
1&0
\end{pmatrix}
+ \BigO{h^{\frac{2k+2}{m+1}}+h^{\frac{2k+1}{3}}}
\end{equation}
where $\kappa$ is given by \eqref{Eq:Kappa}. 
Finally, $\kappa$ 
does not depend on the point $\rho_j^\bullet$ chosen. 
\end{thm}

\begin{proof}
The method of proof is similar as the one explained in Section \ref{Subsection:ConnectionFormula}: we construct two basis of exact solutions near $x=0$, express one as a linear combination of the other, and obtain the transfer matrix from this linear combination. This is done using the integral kernels $\widetilde{K}_{j,S}$ defined above (see \cite[Section 3]{AFH2}). Similarly to Proposition \ref{Prop:IntegralOperators}, we use \eqref{Eq:DegenerateStationaryPhaseEstimate} to prove estimates on the operator norm of those kernels, and we obtain
\[
T = \begin{pmatrix}
-i & \mc{I}(E,h)\\
\mc{I}(E,h) & -i
\end{pmatrix}
+ \BigO{h^{\frac{2k+2}{m+1}}}
\]where
\begin{equation}\label{Eq:NonReducedIntegralTypeB}
\mc{I}(E,h) := -\frac{2\pi}{h^{1/3}}\int_\Rr \frac{r_0(x)}{\sqrt{\xi_1'(x)\xi_2'(x)}} \Ai(h^{-2/3}\xi_1(x)) \Ai(h^{-2/3}\xi_2(x))\, dx.
\end{equation}
Here, 
\[
\xi_j(x) :=
\begin{cases}
- \left( \frac{3}{2}\int_x^{a_j} \sqrt{E-V_j(t)}\, dt \right)^{2/3} &\stext{if} x \leq a_j,\\
\left( \frac{3}{2}\int_{a_j}^x \sqrt{V_j(t)-E}\, dt \right)^{2/3} & \stext{if} x > a_j,
\end{cases}
\]
and $\Ai(y) := (2\pi)^{-1} \int_{\Rr} \exp\left( i\left(\eta^3/3 + y\eta \right) \right) d\eta$ is the so-called \textit{Airy function}. 
Note that this function extends near $E=0 \in \Cc$ to an analytic function. The proof is then reduced to studying the behavior of the integrand of $\mc{I}(E,h)$ and using the stationary phase method. First, the asymptotic behavior of the Airy function $\Ai$ at $- \infty$ (see \cite[Chapter 11]{Olv}) is
\[
\Ai(y) = (2\pi)^{-1/2} y^{-1/4} \exp\left(-\frac{2}{3}y^{3/2}\right) \left(1 + \BigOO{y \to +\infty}{y^{-3/2}}\right)
\]
and implies that, in the region $x \geq a_1 + Ch^{2/3}$, the integrand is exponentially small. More precisely, since $\va{\lim_{x \to +\infty}(V_j(x))} > 0$, we have $\lim_{x \to +\infty} (\xi_j(x)) = +\infty$ so that the integrand is $\BigOO{x \to +\infty}{\exp\left(-g(x)/h\right)}$ with some function $g(x) \underset{x \to +\infty}{\longrightarrow} +\infty$. Integrated, this yields a term of order $\BigO{h^\infty}$. 

Next, we consider some smooth function $\chi_\nu(x)$ equal to 1 on $[2Ch^{2\nu},+\infty)$ and identically vanishing on $(-\infty, Ch^{2\nu}]$, where $C>0$ and $0 < \nu \leq 1/3$ is a parameter. In the region $|a_1-x| \leq Ch^{2/3}$, the integrand is $\BigO{h^{-1/3}r_0(x)}$. In the region $2Ch^{2/3} \leq a_1 - x \leq Ch^{2\nu}$, the integrand is $\BigO{r_0(x)(x-a_1)^{-1/2}}$. To prove this, we used the fact that $\xi_j(x) = \xi_j'(a_1)(x-a_1) + \BigOO{x \to a_1}{(x-a_1)^2}$ with $\xi_j'(a_1) \neq 0$, as well as the asymptotic behavior
\begin{equation}\label{Eq:AymptoticBehaviorAiryFunction-inf}
\Ai(-y) = \pi^{-1/2} y^{-1/4} \sin \left(\frac{2}{3}y^{3/2} + \frac{\pi}{4} \right) + \BigOO{y \to +\infty}{y^{-7/4}}
\end{equation}
of $\Ai$ at $- \infty$. This allows us to show that
\begin{equation}\label{Eq:SmallContributionIntegralTypeB}
-\frac{2\pi}{h^{1/3}}\int_\Rr (1-\chi_\nu(a_1-x))\frac{r_0(x)}{\sqrt{\xi_1'(x)\xi_2'(x)}} \Ai(h^{-2/3}\xi_1(x)) \Ai(h^{-2/3}\xi_2(x))\, dx = \BigO{h^{(2k+1)\nu}}.
\end{equation}
This implies that the main contribution to \eqref{Eq:NonReducedIntegralTypeB} is given by
\begin{equation}\label{Eq:ReducedIntegralTypeB}
    \mc{I}_0(E,h) := -\sum_{\pm} \int_{-\delta}^\delta \chi_\nu(a-x)\sigma(x,E)e^{\pm\frac{i}{h}\phi(x,E)}\, dx.
\end{equation}
where $a := \min(a_1,a_2) = a_j + \BigO{E^{m/2}}$ and $\delta >0$ is a small parameter. Here the amplitude $\sigma(x,E)$ is defined by
\[
\sigma(x,E) := \frac{1}{2}r_0(x)\left[(E-V_1(x))(E-V_2(x))\right]^{-1/4}
\]
and the phase $\phi(x,E)$ is defined by
\begin{equation}\label{Eq:PhaseForComputationOfKappa}
\phi(x,E) := \int_{a_1(E)}^x \sqrt{E-V_1(t)}\, dt - \int_{a_2(E)}^x \sqrt{E-V_2(t)}\, dt.
\end{equation}
Following \cite[Section 4]{AFH2}, we fix $\nu := \min\{1/3,2/(m+1)\}$ and $\nu' :=(m-1)/(m(m+1))$ and split the study of $\mc{I}_0$ in two. 
On one hand we obtain, via an integration by parts,
\[
-\sum_{\pm} \int_{-\delta}^\delta (\chi_\nu\chi_{\nu'})(a-x)\sigma(x,E)e^{\pm\frac{i}{h}\phi(x,E)}\, dx = \BigO{h^{1+\nu'(2k-m)}} = \BigO{h^{(2k+1)\nu}}
\]
On the other hand, one can show that there exists a smooth function $f(y;E)$ defined near $y=0$ satisfying $f(0;E) = 1 + \BigO{E}$ such that the asymptotic behavior of the above functions for $h \to 0^+$ and $x \in \supp\left([\chi_\nu (1-\chi_{\nu'})](a-\cdot)\right)$ is given by
\[
\begin{cases}
	\sigma(x,E) &= \frac12 \frac{r_0^{(k)}(0)}{k!}\frac{(a-x)^{k-1/2}}{V_1'(0)^{1/2}}\left(1 + \BigO{h^{2\nu'}}\right)\\
	\phi(x,E) &= \frac{(-1)^{m/2}q_m}{m+1}f(a-x;E)(a-x)^{(m+1)/2} + \BigO{h^{1+1/m}}
\end{cases}
\]
where $q_m := v_{m/2}\left((m/2)!\right)^{-1}V_1'(0)^{-1/2}.$
Here, we used the fact that $E = \Smallo{h^{2\nu}}$ on the support of $[\chi_\nu (1-\chi_{\nu'})](a-\cdot)$. We compute this oscillating integral using \eqref{Eq:DegenerateStationaryPhase3Cases}. We finally get $\mc{I}_0(E,h) = \kappa h^{\frac{2k+1}{m+1}} + \BigO{h^{\frac{2k+2}{m+1}}+h^{\frac{2k+1}{3}}}$ where
\begin{equation}\label{Eq:Kappa}
\kappa := \frac{2r_0^{(k)}(0)}{k!V_1'(0)^{1/2}}\cos\left(\frac{2k+1}{2(m+1)}\pi\right)\pmb\Gamma\left(\frac{2k+1}{m+1}\right)  \left(\frac{m+1}{q_m}\right)^{\frac{2k+1}{m+1}}.
\end{equation}

\end{proof}

\section{Generalization to an interaction $U=r_0(x) + ihr_1(x)D_x$}\label{Section:r1}
We generalize the results of this article in the case where the interaction term is a first order differential operator with smooth coefficients, that is to say of the form
\[
U = r_0(x) + ihr_1(x)D_x.
\]
with smooth $r_0$ and $r_1$. This is a similar situation as the framework of \cite{FMW2} (where $r_0$ vanishes identically). 
We deal with \ref{Case:Ia} and \ref{Case:Ib} in Proposition \ref{Prop:MainResultr1TypeA}, and with \ref{Case:II} in Proposition \ref{Prop:MainResultr1TypeB}.

\begin{prop}\label{Prop:MainResultr1TypeA}
Suppose that assumptions \ref{Assumption:AnalyticContinuation} up to \ref{Assumption:FiniteOrderForPotentials} and either \ref{Case:Ia} or \ref{Case:Ib} hold true. Assume also that $r_0$ (resp. $r_1$) satisfies Assumption \ref{Assumption:FiniteOrderInteraction} and write $k_0$ (resp. $k_1$) the associated vanishing order. 
Then, in case \ref{Case:Ia}, for all small $h >0$ there exists a bijective map $z_h: \mathfrak{B}_h \to \Res_h(P)$ such that for any $E \in \mathfrak{B}_h$, \eqref{Eq:zh(E)}, \eqref{Eq:Im(E)} and \eqref{Eq:D(E)} hold true with $k =\min(k_0,k_1) < m$. In case \ref{Case:Ib} and with , for all small $h >0$ there exists a bijective map $z_h: \mathfrak{B}_h \to \Res_h(P)$ such that for any $E \in \mathfrak{B}_h$, \eqref{Eq:zh(E)Odd}, \eqref{Eq:Im(E)Odd} and \eqref{Eq:D(E)Odd} hold true with $k =\min(k_0,k_1) < m+1$. 
Moreover, in case \ref{Case:Ia} the $\omega$ of\eqref{Eq:Omega} becomes
\[
\omega = \frac{\mu_{k,m}\left(\frac{\sgn(v_m)(k+1)}{2(m+1)}\pi\right)}{(m+1)k!} 
\pmb\Gamma\left(\frac{k+1}{m+1}\right)  E_0^{\frac{k-m}{2(m+1)}} \left(\frac{2(m+1)!}{\va{v_m}}\right)^{\frac{k+1}{m+1}}\partial_x^k \overline{U}(\rho_+).
\]
and in case \ref{Case:Ib} the $\omega_{odd}$ is obtained replacing the term $\tau r^{(k)}(0) + r^{(k+1)}(0)$ in \eqref{Eq:OmegaOdd} by
\begin{equation}\label{Eq:nur_1}
\tau \partial_x^{k} \overline{U}(\rho_+) + \partial_x^{k+1} \overline{U}(\rho_+) +i V_1'(0) \frac{(k+1)}{2\sqrt{E_0}} r_1^{(k)}(0).
\end{equation}
\end{prop}

\begin{proof}
We redefine the integral operators $\widetilde{K}_{j,S}$ of Section \ref{Section:ConnectionFormula} by
\[
\widetilde{K}_{j,S}f(x) := \frac i2\left(\breve u_j(x)\int_{\pm\delta}^x u_j(y)(U_jf)(y)dy - u_j(x)\int_{\pm\delta}^x \breve u_j(y)(U_jf)(y)dy\right)
\]
where $U_1 := U$ and $U_2 := U^*$, $j \in \{1,2\}$ and $S \in \{L,R\}$. Using an integration by parts we remark that $\widetilde{K}_{j,S}$ is still defined as a bounded linear operator on $\C(I)$. Here, $U_2$ has a symbol of $r_0(x) - ir_1(x)\xi$ and therefore integrals involving $U_1$ and those involving $U_2$ can be treated in a symmetric way. The proof of Proposition \ref{Prop:IntegralOperators} is conducted in a similar way as in the case where $U = r_0(x)$. Indeed, when expanding $\widetilde{K}_{j,S}\widetilde{K}_{\hat{j},S}f(x)$ we obtain a sum of terms of order $h$, of integrals of the same type as the case $U=r_0(x)$ and of integrals of the form
\[
\int_{-\delta}^x u_1(y) r_1(y)F(y) \breve u_2(y)\phi_2'(y) dy
\]
where $F(z) = \int_{-\delta}^y u_2(z) (U_1 f)(z) dz$ is defined on $\C(I)$ via integration by parts. We then apply \eqref{Eq:DegenerateStationaryPhaseEstimate} of Lemma \ref{Lemma:DegenerateStationaryPhase} with $a(y) = a_1(y) r_1(y)F(y) \breve a_2(y)\phi_2'(y)$ and $\phi := \phi_2 - \phi_1$. Here, $a_j$ and $\phi_j$ are the amplitudes and phases corresponding to $u_j$ and $\breve u_j$ in \eqref{Eq:AsymptoticBehaviorScalarSolutions}.
The computation of $\omega$ and $\omega_{odd}$ in the proof of Lemma \ref{Lemma:AlphaBetaEstimates} is also very similar: we conduct calculations using the fact that
\begin{equation}\label{Eq:IntegralForOmegar1TypeA}
\int_{-\delta}^{\delta} (\breve u_2U_2 u_1)(y)dy = \int_{-\delta}^{\delta} \breve u_2(y) r_0(y) u_1(y)dy - i\int_{-\delta}^{\delta} \breve u_2(y) r_1(y) \phi_1'(y) u_1(y)dy + \BigO{h}.
\end{equation}
To prove \eqref{Eq:nur_1}, we use
\begin{flalign*}
(r_0-ir_1\phi_1')^{(k+1)}(0) = \partial_x^{k+1} \overline{U}(\rho_+) +i V_1'(0) \frac{(k+1)}{2\sqrt{E_0}} r_1^{(k)}(0) + \BigO{h}.
\end{flalign*}
Notice that, in the odd case, $1 \leq k+1 \leq m-1$ so that $V_1'(0) = V_2'(0)$. 
\end{proof}

\begin{rmk}
In the cases \ref{Case:Ia} and \ref{Case:Ib}, the role played by $r_0$ and $r_1$ is the same. Intuitively, this comes from the fact that the symbol of the interaction, $r_0(x) + ihr_1(x)\xi$, vanishes in $x$ at the same order as $\min(k_0,k_1)$. In the case \ref{Case:II} however, the symbol $ihr_1(x)\xi$ vanishes at order $1$ in $\xi$, which is why we can expect $r_1$ to have a less important role compared to $r_0$, as described below.
\end{rmk}

\begin{prop}\label{Prop:MainResultr1TypeB}
Suppose that assumptions \ref{Assumption:AnalyticContinuation} up to \ref{Assumption:FiniteOrderForPotentials} as well as \ref{Case:II} hold true. Assume also that $r_0$ (resp. $r_1$) satisfies Assumption \ref{Assumption:FiniteOrderInteraction} and write $k_0$ (resp. $k_1$) the associated vanishing order. 
Then, for all small $h >0$ there exists a bijective map $z_h: \mathfrak{B}_h \to \Res_h(P)$ such that for any $E \in \mathfrak{B}_h$  \eqref{Eq:zh(E)TypeB} and \eqref{Eq:Im(E)TypeB} hold true with $k :=\min(k_0,k_1+1/2)$ and $\kappa$ given either by \eqref{Eq:Kappa} when $k=k_0$ or by
\begin{equation}\label{Eq:Kappar1}
\kappa_1 := \frac{2r_1^{(k_1)}(0)}{k_1!}\cos\left(\frac{2k_1+2}{2(m+1)}\pi\right)\pmb\Gamma\left(\frac{2k_1+2}{m+1}\right)  \left(\frac{m+1}{q_m}\right)^{\frac{2k_1+2}{m+1}}
\end{equation}
when $k=k_1+1/2$.
\end{prop}

\begin{proof}
The constant $\kappa$ appearing in the sub-principal term of the transfer matrix is given by the principal term in the $h$-asymptotic expansion of the integral
\[
-\frac{2\pi}{h^{1/3}}\int_{-\delta}^\delta \frac{1}{\sqrt{\xi_2'(x)}}  \Ai(h^{-2/3}\xi_2(x)) \times (r_0+ihr_1(x)D_x) \left( \frac{1}{\sqrt{\xi_1'(x)}}\Ai(h^{-2/3}\xi_1(x)) \right) dx.
\]
Since we already dealt with the term containing $r_0$, it suffices to deal with integrals of the form
\begin{equation}\label{Eq:NonReducedIntegralTypeBr1}
-2\pi h^{2/3} \int_{-\delta}^\delta \frac{r_1(x)}{\sqrt{\xi_2'(x)}}  \Ai(h^{-2/3}\xi_2(x)) \partial_x \left( \frac{1}{\sqrt{\xi_1'(x)}}\Ai(h^{-2/3}\xi_1(x)) \right) dx.
\end{equation}
In a similar fashion as \eqref{Eq:SmallContributionIntegralTypeB}, we claim that the above integral cut off with $1 - \chi_\nu$ is of order $h^{(2k_1+2)\nu}$
for $\nu$ and $\chi_\nu$ as in Section \ref{Section:typeB}. Here, we need to take in account the asymptotic behavior
\begin{equation}\label{Eq:AymptoticBehaviorDerivativeOfAiryFunction-inf}
\Ai'(-y) = \pi^{-1/2} y^{1/4} \sin \left(\frac{2}{3}y^{3/2} - \frac{\pi}{4} \right) + \BigOO{y \to +\infty}{y^{-5/4}}
\end{equation}
of the derivative $\Ai'$ of the Airy function (see \cite[Chapter 11]{Olv}). In particular, $\Ai(-y)\Ai'(-y)$ is bounded at $y \to +\infty$. With this, we can prove that the two terms
\[
-2\pi h^{2/3}\int_{-\delta}^\delta (1-\chi_\nu(a_1-x))\frac{r_1(x)}{\sqrt{\xi_2'(x)}}  (\xi_1'(x))^{-3/2}\xi_1''(x)\Ai(h^{-2/3}\xi_1(x)) \Ai(h^{-2/3}\xi_2(x)) \, dx
\]
and
\[
-2\pi \int_{-\delta}^\delta (1-\chi_\nu(a_1-x))\frac{r_1(x)}{\sqrt{\xi_2'(x)}} (\xi_1'(x))^{1/2} \Ai'(h^{-2/3}\xi_1(x)) \Ai(h^{-2/3}\xi_2(x)) \, dx
\]
of the integral \eqref{Eq:NonReducedIntegralTypeBr1} cut off with $1 - \chi_\nu$ are indeed $\BigO{h^{(2k_1+2)\nu}}$. Moreover, the integral corresponding to \eqref{Eq:NonReducedIntegralTypeBr1} cut off with $\chi_\nu$ has its first term
\[
-2\pi h^{2/3}\int_{-\delta}^\delta \chi_\nu(a_1-x)\frac{r_1(x)}{\sqrt{\xi_2'(x)}}  (\xi_1'(x))^{-3/2}\xi_1''(x)\Ai(h^{-2/3}\xi_1(x)) \Ai(h^{-2/3}\xi_2(x)) \, dx
\]
of order $h$ because of the asymptotic behavior \eqref{Eq:AymptoticBehaviorAiryFunction-inf} of $\Ai$ and because $\xi_1''(x) = \BigO{(a_1-x)^{-1/2}}$. The second term
\[
-2\pi \int_{-\delta}^\delta \chi_\nu(a_1-x)\frac{r_1(x)}{\sqrt{\xi_2'(x)}} (\xi_1'(x))^{1/2} \Ai(h^{-2/3}\xi_2(x)) \Ai'(h^{-2/3}\xi_1(x)) \, dx.
\]
is an oscillatory integral of the form $\int_{-\delta}^\delta \chi_\nu(a_1-x)\sigma(x) e^{\frac{i}{h}\phi(x)} \, dx$ with $\phi$ as in \eqref{Eq:PhaseForComputationOfKappa} and
\[
\sigma(x) :=  \frac{r_1^{(k_1)}(0)}{k!}\frac{(a-x)^{k_1}}{V_1'(0)^{1/2}}\left(1 + \BigO{h^{2\nu'}}\right).
\]

As in Section \ref{Section:typeB} we compute this integral using an appropriate cutoff and \eqref{Eq:DegenerateStationaryPhase3Cases} and get \\$\int_{-\delta}^\delta \sigma(x) e^{\frac{i}{h}\phi(x)} \, dx = \kappa_1 h^{\frac{2k_1+2}{m+1}} + \BigO{h^{\frac{2k+3}{m+1}}+h^{\frac{2k+2}{3}}}$ with $\kappa_1$ as in \eqref{Eq:Kappar1}.
\end{proof}

Therefore, the transfer matrix \eqref{Eq:TransferMatrixTypeB} in Theorem \ref{Thm:AsymptoticConnectionFormulaTypeB} has a sub-principal term of order $h^{\frac{2k+1}{m+1}}$, with $k = \min(k_0,k_1+1/2)$, as claimed. Remark that $k_0$ is never equal to $k_1+1/2$, hence the coefficient of that sub-principal term is given either by the $\kappa$ defined in \eqref{Eq:Kappa} or by the $\kappa_1$ defined in \eqref{Eq:Kappar1}. 

\begin{rmk}
The fact that $\Ai'$ appears instead of $\Ai$ makes the amplitude $\sigma$ vanish at order $k_1$, instead of $k_0 - 1/2$ in \eqref{Eq:ReducedIntegralTypeB}. This explains the $+1/2$ appearing in $k = \min(k_0,k_1+1/2)$. Another way to see this is to remark that, if we note $\sigma(x)e^{\frac{i}{h}(\phi_1(x)-\phi_2(x))}$ the oscillatory function appearing in the integral \eqref{Eq:ReducedIntegralTypeB} (case $r_1 = 0$), then the integrand of the integral giving $\kappa_1$ is  essentially $\sigma(x)\phi_1'(x)e^{\frac{i}{h}(\phi_1(x)-\phi_2(x))}$. In the Section \ref{Section:ConnectionFormula} (where the derivative of the phase was the same), this factor $\phi_1'(x) = \sqrt{E - V_1(x)}$ in \eqref{Eq:IntegralForOmegar1TypeA} did not vanish at $x=0$ because the crossing point was not a turning point. However in Section \ref{Section:typeB}, the crossing point is a turning point therefore $\phi_1'(x)$ vanishes at order $1/2$ in $x$.
\end{rmk}

Note that the microlocal connection formula (Theorem \ref{Thm:AsymptoticConnectionFormula}) is improved in a similar way, taking the $k$ and $\omega$ (or $\omega_{odd}$ for the case \ref{Case:Ib}, and $\kappa_1$ for the case \ref{Case:II}) defined above.

\section*{Acknowledgement}
I would like to warmly thank Kouichi Taira and Kenta Higuchi whose remarks and advices were most valuable during the writing of this article.

\appendix
\section{A degenerate stationary phase estimate}\label{Appendix:DegenerateStationaryPhaseEstimates}
Our goal in this section is to study the asymptotic behavior in the semiclassical limit $h \to 0^+$ of integrals of the form
\[
\int e^{\frac{i}{h} \phi(y)}\sigma(y)dy.
\]
In this article, there are two situations that require two different types of results on this oscillatory integral.
\begin{enumerate}
    \item[(i)] For integrals of the form $\int_\alpha^{\beta} e^{\frac{i}{h} \phi(y)}\sigma(y)dy$ with fixed $\alpha$ and $\beta$, we can use a cutoff near $0$ and consider it as an integral of a smooth function compactly supported near $0$. 
    \item[(ii)] For integrals of the form $\int_\alpha^x e^{\frac{i}{h} \phi(y)}\sigma(y)dy$ with fixed $\alpha$ and variable $x$, we must take into account the boundary term at $y=x$, and find a bound that does not depend on $x$. In this situation, we need to assume that $k \leq m$. 
\end{enumerate}
The following lemma accounts for both of these cases.
\begin{lemma}\label{Lemma:DegenerateStationaryPhase}
Let $I$ be a bounded open interval of $\Rr$ containing $0$ and fix $\alpha \in I$. 
Let $\sigma  \in \C_b^\infty(I,\Cc)$ and $\phi \in \C_b^\infty(I,\Rr)$ two smooth functions with bounded derivatives at any order. We assume the following :
\begin{enumerate}%
    \item[\textbf{(H1)}] The function $\phi'$ does not vanish on $I \setminus \{0\}$ and is of the form $\phi'(y) = y^{m}\phi_1(y)$ where $\phi_1$ is a smooth function such that $\phi_1(0) \neq 0$.
    \item[\textbf{(H2)}] The function $\sigma$ is of the form $\sigma(y) = y^k \sigma_0(y)$ where $\sigma_0$ is a smooth function.
\end{enumerate}
If $k \leq m$ then there exist a constant $C>0$ independent of $h$ such that, for any $x \in I$ and any $h > 0$ small enough,
\begin{equation}\label{Eq:DegenerateStationaryPhaseEstimate}
    \va{\int_\alpha^x e^{\frac{i}{h} \phi(y)}\sigma(y)dy} \leq C \left(h^{\frac{k+1}{m+1}} \norm{\sigma_0}{L^\infty(I)} 
    + h^{\frac{k+2}{m+1}} \norm{\sigma_0'}{L^\infty(I)}\log\left( \frac{1}{h}\right)^{\delta_{m,k+1}}\right)
\end{equation}
\noindent
where $\delta_{k,m+1}$ is the Kronecker symbol. Furthermore, regardless of $k$ and $m$, the following asymptotic expansion holds: 
\begin{equation}\label{Eq:DegenerateStationaryPhase3Cases}
e^{-\frac{i}{h}\phi(0)} \int_I \sigma(y)e^{\frac{i}{h}\phi(y)}dy = 
\frac{2\mu_{k,m}\left(\frac{\varepsilon(\phi)(k+1)}{2(m+1)}\pi\right)}{m+1}
\pmb\Gamma\left(\frac{k+1}{m+1}\right) 
\left(\frac{m+1}{\va{\phi_1(0)}}\right)^{\frac{k+1}{m+1}}
\sigma_0(0)h^{\frac{k+1}{m+1}}\\
+ \BigO{h^{\frac{k+2}{m+1}}}
\end{equation}
\noindent
where $\pmb\Gamma$ is the Gamma function, $\varepsilon(\phi)$ is $\sgn(\phi_1(0))$ and $\mu_{k,m}$ is defined in \eqref{Eq:mu_kmtheta} replacing $v_m$ by $\phi_1(0)$. When $k$ and $m$ are both odd the value of the previous integral is, with error $\BigO{h^{\frac{k+3}{m+1}}}$,
\begin{equation}\label{Eq:DegenerateStationaryPhaseOddCase}
\frac{2\mu_{k+1,m}\left(\frac{\varepsilon(\phi)(k+2)}{2(m+1)}\pi\right)}{m+1} 
\pmb\Gamma\left(\frac{k+2}{m+1}\right)
\left(\frac{m+1}{\va{\phi_1(0)}}\right)^{\frac{k+2}{m+1}}
\left[
\sigma_0'(0) - \frac{2(2k+1)\phi_1'(0)}{(m+2)\va{\phi_1(0)}}\sigma_0(0)
\right] h^{\frac{k+2}{m+1}}.
\end{equation}
\end{lemma}
\begin{proof}  We may assume that $\sigma$ has a compact support in $I$ (use the non-stationary phase method to deal with points away from $0$).
Let $\rho \in \C^\infty_c(\Rr,[0,1])$ be a cutoff function vanishing for $|y| \geq 2$ and equal to $1$ for $|y| \leq 1$. We set $\varepsilon:= h^{\frac{1}{m+1}}$ and we split $\mc{I}(x):= \int_\alpha^x e^{\frac{i}{h} \phi(y)}\sigma(y)dy$ into
\[
\mc{I}(x)= \mc{I}_1(x) + \mc{I}_2(x):= \int_\alpha^x e^{\frac{i}{h} \phi(y)} \sigma(y) \rho\left( \frac{y}{\varepsilon} \right) dy + \int_\alpha^x e^{\frac{i}{h} \phi(y)}\sigma(y) \left( 1 - \rho\left( \frac{y}{\varepsilon} \right) \right) dy.
\]
For $\mc{I}_1(x)$, we remark that, for $h$ small enough so that $[-2\varepsilon, 2\varepsilon] \subset I$,
\[
|\mc{I}_1(x)| \leq \norm{\sigma_0(y)}{L^\infty(I)} \int_{-2\varepsilon}^{2\varepsilon} |y|^k dy \leq \frac{2^{k+2}}{k+1} h^{\frac{k+1}{m+1}} \norm{\sigma_0(y)}{L^\infty(I)}
\]
To study the second term, we integrate by parts using the formal adjoint $L^Tf(y) := - \partial_y \left( \frac{h}{i \phi'}f\right)(y)$ of $Lf(y) := \frac{h}{i \phi'(y)} \partial_y f(y)$. Since $L\left(e^{\frac{i}{h}\phi}\right) = e^{\frac{i}{h}\phi}$, we get
\begin{equation}\label{Eq:ExpressionI2}
\mc{I}_2(x) = \int_\alpha^x e^{\frac{i}{h} \phi(y)} L^T \left[\sigma(y) \left( 1 - \rho\left( \frac{y}{\varepsilon} \right) \right) \right] dy + \frac{h}{i \phi'(x)}e^{\frac{i}{h} \phi(x)}\sigma(x) \left( 1 - \rho\left( \frac{x}{\varepsilon} \right) \right).
\end{equation}
Next, \textbf{(H1)} implies $\frac{1}{\phi'} \underset{y \to 0}{\sim} y^{-m}$ and $\frac{\phi''}{(\phi')^2} \underset{y \to 0}{\sim} y^{-(m+1)}$ so that the left term in \eqref{Eq:ExpressionI2} is bounded by
\[
C' h \int_{\{y \in I, |y| \geq \varepsilon\}} 
|y|^{k-m}\norm{\sigma_0'(y)}{L^\infty(I)} 
+ |y|^{k-m} |y|^{-1} \norm{\sigma_0(y)}{L^\infty(I)} 
dy
\]
where $C'>0$. The boundary term in \eqref{Eq:ExpressionI2}  is equal to zero when $|x| \leq \varepsilon$. When $|x| \geq \varepsilon$, \textbf{(H2)} implies that $\va{\frac{1}{\phi'(x)}} \lesssim \varepsilon^{-m}= h^{\frac{-m}{m+1}}$ 
so that there exists a constant $C''>0$ such that
\[
\va{\frac{h}{i \phi'(x)}e^{\frac{i}{h} \phi(x)} \sigma(x) \left( 1 - \rho\left( \frac{x}{\varepsilon} \right) \right)} \leq C'' h \varepsilon^{k-m}  \norm{\sigma_0(y)}{L^\infty(I)} = C'' h^{\frac{k+1}{m+1}}\norm{\sigma_0(y)}{L^\infty(I)}.
\]
Here, $k \leq m$ so that $h = \BigO{h^{\frac{k+1}{m+1}}}$. Combining the estimates previously obtained, we obtain the claimed estimate for $\mc{I}$. In particular, the $\log(1/h)$ term only appears when $k-m = -1$.


We now prove \eqref{Eq:DegenerateStationaryPhase3Cases} and \eqref{Eq:DegenerateStationaryPhaseOddCase}.
We bring the study on $\Rr$ instead of $I$ using a cutoff near $0$. The contribution away from $0$ is dealt with using the non-stationary phase method. First, we prove with the residue theorem that
\[
\int_\Rr y^k e^{\frac{i}{h}y^{m+1}}dy = \frac{2}{m+1}\pmb\Gamma\left(\frac{k+1}{m+1}\right) \mu_{k,m}\left(\frac{k+1}{2(m+1)\pi}\right)h^{\frac{k+1}{m+1}}.
\]
This is proven in a similar way to the computation of the Fresnel integral $\int_\Rr e^{-it^2}dt$. Next, we use the variable change $\psi(y) = y\left(\sgn(\phi_1(0))\frac{\phi_1(y)}{m+1}\right)^{\frac{1}{m+1}}$ and obtain
\[
\int_\Rr \sigma(y) e^{\frac{i}{h}\phi(y)}dy = \int_\Rr \sigma(\psi^{-1}(y))e^{\frac{i}{h}\sgn(\phi_1(0)) y^{m+1}} \va{\left(\psi^{-1}\right)'(y)}dy.
\]
Here \textbf{(H1)} ensures that, locally around $0$, $\left(\psi^{-1}\right)'$ is of constant sign. A straightforward computation using Taylor expansions at $0$ then gives \eqref{Eq:DegenerateStationaryPhase3Cases} and \eqref{Eq:DegenerateStationaryPhaseOddCase} as claimed.

\end{proof}


\end{spacing}

\end{document}